%% file: maintaining_human_oversight.tex
\documentclass[11pt]{article}
\usepackage[T1]{fontenc}
\usepackage[utf8]{inputenc}
\usepackage[margin=1in]{geometry}
\usepackage{amsmath,amssymb,amsthm,booktabs,array,graphicx}
\usepackage[round]{natbib}
\usepackage{microtype}
\usepackage{setspace}
\usepackage[hidelinks]{hyperref}
\usepackage{xurl}
\input{numbers_oversight}
\input{numbers_conference}

\newtheorem{proposition}{Proposition}[section]
\newtheorem{lemma}[proposition]{Lemma}
\hypersetup{pdftitle={Maintaining Human Verification Capacity under Automation},pdfauthor={Li Gan and Eric Gan}}
\title{Maintaining Human Verification Capacity\\ under Automation}
\author{Li Gan\thanks{Department of Economics, Texas A\&M University. Email: ganli@tamu.edu.} \ and Eric Gan\thanks{Redwood Research. Email: ericgan2014@gmail.com.}}
\date{September 2026}
\begin{document}
\maketitle
\onehalfspacing

\begin{abstract}
\noindent Human verification depends on expertise that must be maintained before it is needed. This paper links reliance on automated checks, investment in human checking ability, and performance during an interruption. Better checking lowers the error reduction gained from an extra unit of human skill while the checker works. It can therefore reduce the incentive to preserve independent expertise, even when it lowers the best achievable expected cost of maintenance and errors. In an illustration, a more informative checker raises detection while it works from \hoPowWorkA{} to \hoPowWorkC{} percent, but the organization then keeps no routine practice and detects \hoPowOnsetC{} percent of errors when the checker first fails, against \hoPowOnsetA{} percent with a less informative checker. A detection requirement therefore concerns both current capability and its survival until new training becomes effective. Evidence from colonoscopy and aviation documents weaker unaided performance under routine automation, without isolating the mechanism. The framework connects a detection target to an explicit reserve of expertise and a training pipeline. Standard detection tests estimate each quantity and reveal automation bias and silent checker failures. The framework distinguishes the requirement from minimizing expected loss and proposes a longitudinal test.

\medskip
\noindent\textbf{Keywords:} human verification; skill retention; capability reserve; signal detection; automation; safety-critical work
\end{abstract}

\section{The question that a plan for human checking should answer}

Consider an organization that uses AI to help maintain an important software service. Its staff review changes, but much of the detailed checking comes from an automated analysis service. The checker works well. Reviewing becomes faster, fewer changes require unaided diagnosis, and keeping every reviewer equally practiced seems less urgent. Some training may continue, but the amount and form of practice adapt to the new workflow.

Now the analysis service becomes unavailable, or a change in the software places part of the work outside its validated coverage. The underlying service still needs maintenance. A policy says that experienced staff will review the changes instead. What establishes that those staff can still recognize the relevant errors, and that their ability will last until additional training or help becomes effective?

This is a question about the condition of a safeguard. The presence of reviewers in a process diagram does not answer it. Nor does strong performance when the checker is available. The checking system combines people, tools, information, and organizational choices; deployment of automation can change the future condition of those components.

An argument that people can check the work when the automated checker is unavailable must account for the \emph{maintenance of their verification capacity}: the task-specific capability that lets them detect a specified error without that checker. The proposed framework follows three steps. Specify an error that the organization must detect. Determine what independent evidence remains if a usual checking channel is lost. Then assess whether the available expertise and already committed training can sustain detection until a response becomes effective. The calculation examines one explicit dependency in a checking arrangement.

The framework yields three implications. First, a better routine checker can leave a weaker reserve, because it lowers the error reduction gained from an extra unit of human skill while the checker works (Section~\ref{sec:implication_reserve}). In a numerical illustration, the organization with the most informative checker detects \hoPowWorkC{} percent of errors while the checker works, against \hoPowWorkA{} percent with the least informative checker. It keeps no routine practice, however, and its detection at the onset of an interruption is \hoPowOnsetC{} percent, against \hoPowOnsetA{} percent; because rebuilding starts at once, the gap closes to \hoGapTen{} points within ten periods. Second, average availability does not describe readiness (Section~\ref{sec:implication_availability}). At the same \hoPidPct{} percent disruption frequency, persistent interruptions make a first unit of capability more valuable during a disruption than while the checker works. They also lower the share of harm an organization must bear before it invests at all, from \hoChiCritIid{} to \hoChiCritPers. Third, a promise to retrain is not a reserve (Section~\ref{sec:implication_buffer}): if new formation takes \hoResLwordEn{} periods to count and \hoResLossPct{} percent of capability is lost each period, a requirement of 1 needs a starting stock of \hoResReq.

The same illustration shows that a detection floor is a different objective from minimizing expected cost (Section~\ref{sec:standard}). Holding enough capability to detect at least \hoFloorTargetPct{} percent of errors during any disruption costs the full-harm organization nothing extra with the two less informative checkers and about \hoFloorSharePctC{} percent of its optimized cost with the most informative one. Appendix~\ref{app:numerical} shows this in a two-period version of the decision: better checking makes the same floor weakly more expensive to meet. During an interruption, accepting more false alarms is a second lever: raising the false-alarm rate from \hoFAExPct{} to \hoOpFAHigh{} percent lifts the most exposed organization's detection at onset from \hoPowOnsetC{} to \hoOpHitCHigh{} percent without added capability. Section~\ref{sec:measure} shows how an organization can estimate these quantities from detection tests and training records, and how the same tests reveal automation bias and silent checker failures. Section~\ref{sec:ai_scope} discusses possible applications to AI safety, including errors timed to interruptions.

The software example rests on documented practice. Secure-development guidance distinguishes automated analysis from expert review \citep{nistssdf}, and code can compile even when scanners fail to analyze it \citep{githubscan}, so production can continue while a check is unavailable. How costly interruptions are, how fast expertise decays, and how quickly help arrives are what the framework asks an organization to measure (Section~\ref{sec:measure}).

\section{An established concern and what this paper adds}

The concern is older than generative AI. \citet{bainbridge1983} described the difficulties created when automation removes the ordinary work that sustains the operator's ability to handle abnormal conditions. Loss of practice, the need for training, and the difficulty of relying on people only in emergencies are not new observations. Research on skill retention finds that skills decay with nonuse, faster for cognitive and accuracy-based tasks than for physical ones \citep{arthur1998}. Airline pilots who rely on cockpit automation retain manual control better than the cognitive skills that manual flight requires \citep{casner2014}. In safety-critical professions, retention depends on initial training, practice or refreshers, and task complexity \citep{vlasblom2020}. Regulators in aviation and nuclear power encourage or require recent practice (Section~\ref{sec:existing_evidence}). Work on the out-of-the-loop problem also distinguishes immediate situation awareness from practiced skill \citep{endsley1995}. Automation bias is a further problem: people using an imperfect decision aid miss problems it fails to flag and follow its incorrect advice, experts included, and training or instructions have not prevented it \citep{parasuraman2010}. Users of AI recommendations likewise often accept wrong suggestions \citep{bucinca2021}. An expert who has not followed the current situation, a reviewer who defers to the aid, and a reviewer whose competence has deteriorated face different problems. The framework models the last, and the detection tests in Section~\ref{sec:measure} also measure the second.

Recent AI research makes the connection particularly direct. \citet{mitchell2026} argue that agent deployment can undermine the cognitive abilities needed for effective oversight and discuss design and organizational support for maintaining them. \citet{shen2026} provide experimental evidence that AI assistance can affect learning a programming library, with outcomes differing across interaction patterns. This is relevant evidence about skill formation, not an estimate of long-term deterioration in a particular checking task. Interviews with developers also show that oversight involves planning, monitoring, and review, and that checking agent-generated code can be difficult \citep{dhanorkar2026}. Clinical evidence comes closest to the question here: after AI-assisted detection was introduced in colonoscopy, detection in procedures performed without AI declined \citep{budzyn2025}; Section~\ref{sec:existing_evidence} discusses what that design can establish. What these accounts leave open is how much verification capacity an organization needs when an automated check fails, how long it lasts, and how long it takes to rebuild.

Economic models have begun to treat fallback skill as an investment. \citet{singh2026} study a firm that keeps workers engaged, at a cost in current output, because engagement builds the skill needed when AI fails, and they distinguish AI capability from reliability. \citet{bauer2026} shows how preserved fallback skill, together with liability, can signal competence in markets for expert services. In both, the preserved skill matters when the AI fails. Here human expertise also adds information while the automated checker works, which is why a better working checker can substitute for it (Section~\ref{sec:implication_reserve}). At the level of individual roles, \citet{delachica2026} propose a decision protocol that prices tacit-knowledge erosion and reduced resilience. Standard automation cost--benefit analysis omits these costs; the protocol can recommend augmenting or preserving roles that such analysis would automate. The framework here models how one of those costs, the erosion of verification capacity, arises and how it can be measured.

Research on joint human and automated checking includes the signal-detection analysis of \citet{sorkinwoods1985}, who allowed a human and an automated monitor to observe partially correlated channels. Experiments on assistance for difficult evaluation tasks provide another point of contact \citep{bowman2022}. Both take the human's capability as given; the question here is how it is maintained when automated checking removes the practice that formed it.

Regulation raises the same question. The European Union AI Act requires high-risk systems to be designed so that natural persons can oversee them effectively, including awareness of automation bias (Article~14). It also requires deployers to assign oversight to people with the necessary competence, training, and authority (Article~26) \citep{euai2024}. It does not specify how that competence is maintained when AI performs the routine work, or how it is verified for a period in which an automated component is unavailable. Frameworks for effective human oversight define its architecture, roles, and processes \citep{gaube2026}; this paper concerns one input such arrangements assume, the overseers' capability over time.

The contribution has two parts. The first is a compact analytical connection between a detection target, a maintained capability, and the time needed to replenish that capability. It makes three familiar concerns jointly explicit: improvement in an ordinary checking channel can alter the amount of independent expertise retained; good expected performance need not imply adequate performance during interruption; and training begun in response to a problem cannot protect decisions made before that training becomes effective. The appendices prove the formal results with elementary arguments. The second is a measurement protocol. Because the detection model is standard signal detection, the required capability, its survival rate, and the checker's own contribution can be estimated from detection tests, and the lag from training records; the same tests reveal automation bias and silent checker failures. Both parts rest on one model of detection, so competence, reliability, and emergency readiness become parts of one requirement rather than unrelated ones.

The framework addresses one component of organizational resilience. Broader resilience theory concerns adaptation when systems approach or cross the boundaries of their capabilities \citep{woods2018}. The framework below handles a specified error and a recognized loss of a specified checking channel. Its precision comes from that restriction.

\section{The framework in ordinary terms}
\label{sec:framework}

A checking arrangement has three relevant components. First, there is evidence available without the modeled specialist expertise: elementary tests, observable outcomes, or other checks. Second, there is a checking channel that can supply additional evidence but may become unavailable. Third, there is human expertise that helps distinguish acceptable work from a particular error: the organization's verification capacity. Expertise is useful because it improves detection, not simply because a human has approved the result.

For a given error, more informative evidence improves detection. Human expertise can substitute for some of the evidence supplied by the external checker. When the checker becomes better, less human expertise may be needed to obtain a particular performance level while it is working. This need not hold for every combination of people and tools: some tools make expert interpretation more valuable. In the detection model used here it holds at every level of precision when the false-alarm rate exceeds \hoAlphaCrit, as it does in the illustration. With stricter limits, expertise and checking are complements over a range where detection is below one half and additional evidence has increasing returns (Lemma~\ref{lem:substitution} and Appendix~\ref{app:submodularity}). Even then, a better working checker lowers the investment in expertise whenever the checker, with only the baseline evidence, detects at least half of the errors (Proposition~\ref{prop:checker}).

Expertise also changes over time. Without sufficient formation, part of it is lost. Relevant practice and deliberate training can replenish it, but their benefits arrive later. The basic accounting is:
\begin{quote}
\emph{Future verification capacity equals surviving current capacity plus newly effective practice and training.}
\end{quote}
The formulation allows replacement training. Preserving verification capacity does not require preserving every task through which it was learned: simulations, independent diagnosis, or designed exercises may replace routine production. Whether they do so is a question about learning outcomes. In medical education, simulation with deliberate practice has outperformed traditional clinical training for specific skills \citep{mcgaghie2011}; whether designed exercises also reproduce the range of cases met in routine work has to be assessed task by task. Counting course hours or requiring human clicks does not establish effective formation.

The organization decides how much practice and training to support. Preserving practice is costly because manual execution forfeits the immediate speed, volume, and cost advantages of full AI automation. Decision makers in the model anticipate skill loss, so low reserves need not reflect complacency. When a tool reliably supplies information that people otherwise provided, maintaining human capability becomes less attractive. Whether that reduction is acceptable depends on the safety objective and the alternatives available during an interruption.

The distinction between current assistance and future formation is essential. A person may perform better with a tool today while learning less from the workflow. Conversely, some forms of assistance may preserve or improve learning. A single performance measurement taken while assistance is present cannot distinguish these possibilities.

\section{What existing evidence tells us}
\label{sec:existing_evidence}

Two professions document the outcome this framework concerns. Rules in aviation and nuclear power already require recent practice. Conference records show that evaluation capacity can expand unevenly by role. None of this evidence isolates the dynamic mechanism.

Medicine provides the closest observation. In a multicenter observational study of colonoscopy, the adenoma detection rate in procedures performed without AI fell from 28.4 to 22.4 percent after AI-assisted detection was introduced \citep{budzyn2025}. The study measures the outcome this framework concerns: detection without the automated aid, after routine exposure to it. Being observational, it does not by itself separate a loss of skill from changes in effort, case mix, or other conditions.

Aviation shows the institutional response. Citing an increase in manual handling errors, the Federal Aviation Administration in 2013 encouraged airlines to build manual flight into line operations and training, because continuous use of automation does not reinforce manual flying skills \citep{faa2013safo}. Later that year, a joint government--industry working group recommended that pilots be given opportunities to practice manual flight \citep{parccast2013}. The concern is consistent with simulator evidence that pilots who rely on automation retain their control skills better than the cognitive skills that manual flight requires \citep{casner2014}.

Observational studies also link recent practice to manual flying skill. In simulator tests of airline pilots, manual handling performance was related to recent flying experience \citep{ebbatson2010}. Among 126 randomly selected airline pilots, recent flight practice predicted performance on a manual precision approach more strongly than time since flight school or total flight experience \citep{haslbeck2016}. In 2016 the Department of Transportation's Inspector General found that the agency had no process to ensure that pilots trained to use and monitor automation also maintain proficiency in manual flight, and that it was not well positioned to determine how often pilots fly manually \citep{dotoig2016}. The sequence follows the framework: a regulator responded to skill loss under automation by encouraging retained practice, and its auditor then found that the amount of practice was not being measured.

Some regulators go further and set practice floors. A licensed reactor operator keeps active status only by performing the operator's functions on at least seven 8-hour or five 12-hour shifts per calendar quarter, and one who falls short must complete 40 hours of supervised shift functions before resuming. Every licensee also passes an annual operating test. Requalification includes annual practice of specified abnormal events, on a simulator where needed \citep[10 CFR 55.53 and 55.59]{nrc2026}. An airline pilot may not serve without at least three takeoffs and landings in the type within the preceding 90 days \citep[14 CFR 121.439]{faa2026recency}. Such rules count practice and test proficiency at fixed intervals. The framework says what they should secure: the unaided detection that a stated target requires, over the time before help arrives.

Academic peer review shows that evaluation capacity can differ by role. Published conference records distinguish the reviewers who prepare individual assessments from the area chairs who synthesize those assessments and help reach decisions. Between 2024 and 2025, submissions to ICLR rose by \confIclrSubGrowth{} percent and its reviewer roster more than doubled, in the first year of a reciprocal-reviewing requirement that covered reviewers but not area chairs \citep{iclr2025cfp}. Area-chair numbers grew by \confIclrAcGrowth{} percent, so submissions per listed area chair rose from \confIclrSubPerAcPre{} to \confIclrSubPerAcPost; NeurIPS showed the same pattern (Table~\ref{tab:conference} in Appendix~\ref{app:conference}). The comparison is descriptive; its lesson for this framework is to distinguish the activities on which a decision depends. Preparing a report, reconciling conflicting reports, and making a timely decision can require different capabilities, and a count of people assigned to checking does not establish that every needed capability is present. The capability measure in the framework should therefore be tied to a specified detection task, rather than interpreted as an organization's total reviewer headcount.

\citet{ganlabor2026} supplies a starting point for describing those tasks. The study distinguishes execution---producing a first-pass output---from evaluation---judging whether that output is correct and fit for use---across 19,265 O*NET task statements. The measures are reproducible across model coders and editions of the task data, but they describe work content rather than independently measured expertise. An organization could adapt the distinction to record which production and checking activities people still perform after deployment, what feedback they receive, and what practice a training program replaces. Such records would describe opportunities for learning; they would not measure the resulting capability.

Together, these records give the framework two professions in which unaided performance has been observed to weaken under routine automation and observational evidence that recent practice predicts manual skill. They also provide rules that already require recent practice, an organizational example of capacity that differs by role, and a task-measurement foundation. None isolates the dynamic link: whether a change in practice alters later independent detection, and how long effective training takes to restore that capability. The study proposed below links work records to repeated performance assessments under the relevant checking conditions.

\section{Three implications}

Three implications for checking arrangements follow. The first two are proved in Appendix~\ref{app:numerical} for a two-period version of the organization's decision, using only the curvature of the miss probability in Lemma~\ref{lem:substitution}; the numerical illustration solves the infinite-horizon version. The third is an accounting requirement imposed by training lead times and does not depend on how the organization optimizes.

\subsection{Better routine checks can leave a weaker reserve}
\label{sec:implication_reserve}

The mechanism rests on one sign condition on $m_w$, the probability of missing an error while the checker works:
\begin{equation}
\frac{\partial^2 m_w}{\partial K\,\partial G_w}>0.
\label{eq:substitution}
\end{equation}
A higher precision $G_w$ of the working checker strictly decreases the error reduction gained from an extra unit of human skill $K$. Skill still reduces errors, and more so when the checker fails, but it is worth less while the checker works: at the old level of investment, the last unit no longer covers its cost. Improving the checker therefore weakly lowers the chosen investment in human capability at every starting stock, in both checker states (Proposition~\ref{prop:checker}). Section~\ref{sec:framework} gives the conditions under which \eqref{eq:substitution} holds. The key assumption is that human expertise also adds information while the checker works. If people mattered only when the checker failed, as in models of fallback skill \citep{singh2026}, a better working checker would not by itself lower the value of their expertise.

In the infinite-horizon illustration the lower investment persists over whole histories: on the same sequence of checker states and from the same starting stock, the organization with the better checker holds less capability at every later date (Appendix~\ref{app:numerical}), so it can enter a future interruption with less expertise. Detection while the checker is missing can then be worse, even though access to a better checker reduces the best achievable expected cost. An argument that people can take over when the checker fails therefore needs both ordinary and interruption performance. The improvement changes both the information the tool supplies and how the organization adapts.

A numerical illustration of the infinite-horizon version (Appendix~\ref{app:numerical}) makes the distinction visible (Table~\ref{tab:example}). The organization bears the full harm from undetected errors. Thus the comparison does not rely on managers shifting losses onto someone else. After a prolonged working spell, a stronger checker is associated with less retained practice and less retained human capability. Practice disappears before training does because the first units of training cost less than practice (Appendix~\ref{app:numerical}). The stronger checker raises detection while it works, from \hoPowWorkA{} to \hoPowWorkC{} percent, and lowers it at the onset of an interruption, from \hoPowOnsetA{} to \hoPowOnsetC{} percent: when all three organizations encounter the same loss of checking, their initial detection probabilities differ by up to \hoGapOnset{} percentage points. Measured in misses, the organization with the most informative checker lets \hoMissWorkC{} percent of errors through while the checker works and \hoMissOnsetC{} percent at the onset of an interruption, close to the \hoMissUntrained{} percent that people without task-specific training would miss. The difference is concentrated at the start of the interruption: rebuilding begins at once, and after ten periods the gap is \hoGapTen{} points.

\begin{table}[htbp]
\centering
\caption{An illustrative improvement in checking, the reserve it induces, and recovery}
\label{tab:example}
\small
\begin{tabular}{lccccccc}
\toprule
& & & \multicolumn{5}{c}{Detection (\%)}\\
\cmidrule(lr){4-8}
& Practice & & Checker & \multicolumn{4}{c}{During the interruption}\\
\cmidrule(lr){5-8}
Working checker & kept & Capability & working & Onset & After 1 & After 5 & After 10\\
\midrule
Less informative & \hoPracA & \hoKA & \hoPowWorkA & \hoPowOnsetA & \hoPowOneA & \hoPowFiveA & \hoPowTenA\\
Intermediate & \hoPracB & \hoKB & \hoPowWorkB & \hoPowOnsetB & \hoPowOneB & \hoPowFiveB & \hoPowTenB\\
More informative & \hoPracC & \hoKC & \hoPowWorkC & \hoPowOnsetC & \hoPowOneC & \hoPowFiveC & \hoPowTenC\\
\bottomrule
\end{tabular}
\begin{minipage}{0.96\textwidth}
\small\vspace{5pt}
These are simulated model outcomes, not measured failure rates. Capability is the unaided $d'^2$ that task-specific expertise adds above the baseline of \hoDpBaseSq{} (Section~\ref{sec:measure}). Practice kept is the share of routine work retained for practice during the working spell. Each row starts from its own long-working-spell limit, not a common initial stock or a stationary average; Appendix~\ref{app:illustration} shows that the onset after such a spell is the worst case over histories. Checker working is detection at that limit while the checker works; the interruption then lasts \hoOutageLength{} periods, and the last four columns report detection at its onset and after 1, 5, and 10 periods. Checker precision is respectively \hoGA, \hoGB, and \hoGC, so the checkers' own $d'^2$ are \hoDpCheckSqA, \hoDpCheckSqB, and \hoDpCheckSqC; all other parameters and the disrupted checker are fixed. The organization minimizes maintenance resources plus full expected harm. Better checking still lowers its optimized expected cost when the policies are compared from a common initial stock (Section~\ref{sec:standard}). Appendix~\ref{app:numerical} gives the parameterization and model details.
\end{minipage}
\end{table}

The illustration shows that an organization can make good use of an improved checker and retain a weaker fallback. Evaluating that trade-off requires both the ordinary benefit and the vulnerability during interruptions, together with the probability and cost of interruption.

\subsection{Average availability does not describe readiness}
\label{sec:implication_availability}

Two checking services can be available for the same fraction of time yet have different patterns of interruption. One has many short interruptions; another has longer working spells followed by longer disrupted spells. Training that takes time to become effective has different usefulness in these environments.

The formal model separates the average frequency of disruption from its persistence (Appendix~\ref{app:persistence_proof}). If next period's checker state is independent of today's, today's state changes current detection but not the investment chosen at a given stock: that investment only becomes effective next period. If the state persists, a disruption makes future missing information more likely, and the organization has more reason to rebuild (Proposition~\ref{prop:states}). This result holds because the model separates formation from current exposure. If practice also prevents current errors directly, current checker status can matter for investment even without persistence. In the numerical illustration with the intermediate checker, hold the average disruption frequency at \hoPid{} and raise persistence from zero to \hoGamma. The discounted benefit of a first unit of capability, measured at zero capability with no further investment and full internalization of harm, changes from \hoMBiid{} in both states to \hoMBwork{} while the checker works and \hoMBdisr{} while it is disrupted. It also lowers the internalized share of harm below which the organization never invests, from \hoChiCritIid{} to \hoChiCritPers; the threshold is the cost of a first unit of training, $\kappa$, divided by that benefit in the disrupted state (Appendix~\ref{app:numerical}).

Persistence cuts both ways: longer working spells and longer disrupted spells pull on the decision differently. The useful implication is that a mean-availability statistic is incomplete. Readiness assessment should consider the sequence of conditions in which expertise is retained, lost, and rebuilt. The model also takes the frequency and length of interruptions as given. If maintained capability made them rarer or shorter, for example because practiced staff restore or replace the checker sooner, capability would earn a further return, and the illustration would understate its value.

\subsection{A promise to retrain is not a reserve}
\label{sec:implication_buffer}

Suppose the organization wants to detect a specified error with at least a chosen probability while the usual checker is missing. Given the other evidence that remains, this requirement implies a minimum amount of human capability. If a disruption can occur before additional training becomes effective, the minimum must be met by expertise already available and training already in the pipeline.

The calculation is simple but useful. Assume, only for illustration, that a capability of 1 is required, \hoResSurvivePct{} percent of current capability survives each period, and newly started formation first contributes \hoResLwordEn{} periods after the disruption begins. With no earlier training arriving, the organization must cover the present decision and the next three decisions. Starting at 1 is insufficient: capability falls to \hoResLowOne, \hoResLowTwo, and \hoResLowThree. Starting at about \hoResReq{} covers all four decisions (Figure~\ref{fig:reserve}). Training arriving later may restore capacity, but it cannot change the earlier decisions. The requirement grows geometrically with the lag, as $q^{-(L-1)}$, when no formation can take effect during the interval. That condition matters. If manual checking during the interruption itself maintains skill, or if qualified help arrives sooner, the requirement is smaller; if the interruption ends before the lag has passed, only the periods it lasts need to be covered.

\begin{figure}[htbp]
\centering
\includegraphics[width=0.88\textwidth]{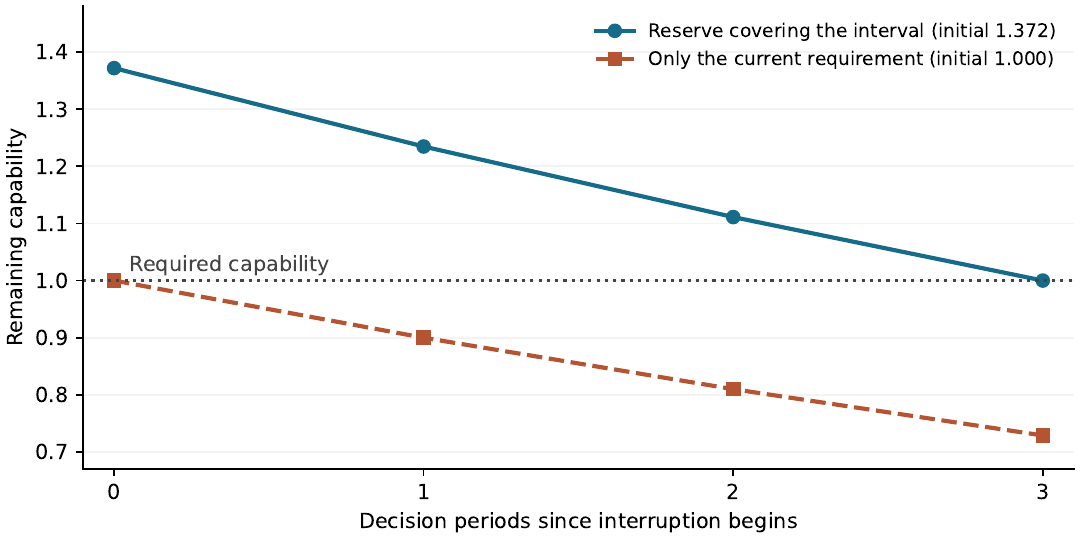}
\caption{Capacity during the interval before new training can help. This arithmetic example assumes \hoResLossPct{} percent loss per period, no committed inflows, and a required capability normalized to 1. The higher starting stock is $1/\hoResQ^3=\hoResReq$. The time unit and loss rate are hypothetical. The figure stops before the first new training becomes effective; it makes no claim about subsequent performance.}
\label{fig:reserve}
\end{figure}

Already scheduled training changes this calculation. If sufficient qualified capacity is due to arrive during the interval, the organization need not hold the entire buffer as existing expertise. Proposition~\ref{prop:reserve} gives the general condition with committed inflows; without them it reduces to $K_t \ge q^{-(L-1)}K_d^*$. It is an accounting requirement for covering the specified interval, not the solution of an optimal investment problem with a longer lag.

\section{Expected performance and a minimum standard are different choices}
\label{sec:standard}

An organization may choose maintenance to minimize its expected harm and the resources used to prevent that harm. A different requirement says that a particular detection probability must be maintained even during an interruption. These objectives need not select the same level of capacity.

The distinction remains when the organization counts every modeled loss. Holding the severity of harm fixed, a less likely interruption receives less weight in an expected-loss calculation. A requirement applying whenever that event occurs still has to be satisfied. Meeting the requirement may therefore require more reserve capacity than the expected-loss optimum. In the illustration, averaged over a long history of working and disrupted periods, the organization with the most informative checker misses \hoAvgMissC{} percent of errors, against \hoAvgMissA{} percent with the least informative one; during interruptions it misses \hoAvgMissDisrC{} percent, against \hoAvgMissDisrA{} percent. These comparisons assume that errors arrive at the same rate whether or not the checker works. If they are more frequent during interruptions, for example because a system change both disables the checker and introduces errors, an expected-loss policy computed with equal rates holds too small a reserve, while a floor, which bounds detection per error, is unaffected. The formal results continue to hold (Appendix~\ref{app:numerical}). The extra capacity is costly; its desirability depends on the purpose and strictness of the requirement. A floor has one advantage of its own: it does not require estimating the probability and persistence of rare interruptions, which an organization's own history seldom reveals, and it still protects when an interruption coincides with unfamiliar work (Section~\ref{sec:measure}).

The companion model gives the cost of such a floor \citep{ganprocuring2026}. Suppose that at every date the organization must hold enough capability to detect at least \hoFloorTargetPct{} percent of errors if the checker is disrupted, and that it starts from a common stock of \hoFloorKzero, which meets the requirement. As working precision rises from \hoGA{} to \hoGB{} to \hoGC, the full-harm organization's optimized cost per period falls from \hoPlanCostA{} to \hoPlanCostB{} to \hoPlanCostC. The floor adds nothing to that cost with the two less informative checkers, whose optimal policies already hold enough capability, and \hoFloorPremPlanC{} per period with the most informative one, about \hoFloorSharePctC{} percent of the optimized cost. The intermediate organization holds just enough: its detection at the onset of an interruption is \hoPowOnsetB{} percent (Table~\ref{tab:example}). Better checking lowers expected cost while making the same floor more expensive to meet; Proposition~\ref{prop:floor} proves this for the two-period version of the decision. These costs start from a common stock, so they are not comparable with the rows of Table~\ref{tab:example}, which start from each organization's own long-working-spell limit.

When an organization bears only some of the damage from errors, it can have an additional reason to underinvest. In the numerical illustration, an organization bearing \hoChiPrivatePct{} percent of the harm, with the intermediate checker, retains capability \hoKPrivB{} rather than \hoKB, and its detection at the onset of an interruption is \hoPowOnsetPrivB{} percent rather than \hoPowOnsetBround{} percent. Such an organization is further from a floor: meeting the \hoFloorTargetPct{} percent requirement would raise its own cost by \hoFloorPremBuyA, \hoFloorPremBuyB, and \hoFloorPremBuyC{} per period with the three checkers. Accountability and liability may address that incentive, but they do not remove the distinction between expected performance and a conditional floor. A fully accountable organization can still fail a floor it was never required to meet.

Human reserves are also only one response. A second checking channel may remain sufficiently informative. The organization may be able to delay the decision safely, suspend the affected service, or obtain independent expertise before the next consequential decision. In these cases the required human reserve can be small or zero. The relevant question is what protection is actually available in time.

A related lever is the operating point. During an interruption an organization can accept more false alarms, escalating or deferring more work, which lowers the capability a floor requires. In the illustration, the intermediate organization detects \hoHitWithout{} percent of errors without its checker at a \hoFAExPct{} percent false-alarm rate and \hoOpHitBHigh{} percent at \hoOpFAHigh{} percent. The most informative organization's detection at the onset of an interruption rises from \hoPowOnsetC{} to \hoOpHitCHigh{} percent. A \hoFloorExPct{} percent floor needs capability \hoKdStarEx{} at the first rate and \hoOpKdHigh{} at the second. Delaying the decision or suspending the service is the extreme case, in which everything is held back. These figures hold capability at its level when the interruption begins; an organization that planned such a mode in advance would also adjust its investment, which the model does not compute. The model provides no general justification for maximizing the human share of work.

\section{What an organization could measure}
\label{sec:measure}

The framework suggests a way to make a claim about verification capacity testable. These are proposed assessment questions, not a validated checklist or a new certification standard. NIST's AI Risk Management Framework already includes proficiency, training, oversight processes, and ongoing evaluation \citep{nistrmf}. This paper adds one structure: a link from those activities to the capability and time interval on which a particular claim depends. Table~\ref{tab:measure} lists what to estimate and how; the paragraphs below explain each step.

\begin{table}[htbp]
\centering
\caption{What to estimate, and how}
\label{tab:measure}
\small
\begin{tabular}{>{\raggedright\arraybackslash}p{0.33\textwidth}>{\raggedright\arraybackslash}p{0.59\textwidth}}
\toprule
Quantity & How to estimate it\\
\midrule
Required capability $K_d^*$ & Choose the detection target $1-\varepsilon$ and false-alarm rate $\alpha$; then $d'^*=z(1-\alpha)+z(1-\varepsilon)$, and $K_d^*$ is $d'^{*2}$ less the baseline $d'^2$ and the disrupted checker's $d'^2$\\
Baseline evidence $b_0^2I_0$ & Unaided $d'^2$ of people without the task-specific training\\
Human capability $K$ & Unaided $d'^2$ less the baseline\\
What the checker adds & Difference in $d'^2$ with and without the checker\\
What the checker could add & The checker's own $d'^2$, scored alone on the same cases\\
Survival rate $q$ & Coefficient on the previous score when the unaided $d'^2$ above baseline is regressed on its previous value and on recorded practice and training, using split-half scores (see text)\\
Lag $L$ & Trainees' time to competence\\
Committed inflows $v$ & Training pipeline records\\
What practice and training add & Their coefficients in the same regression, with practice taken from work records of practice and diagnosis, not job titles\\
\bottomrule
\end{tabular}
\begin{minipage}{0.96\textwidth}
\small\vspace{5pt}
$d'$ is the sensitivity index in \eqref{eq:dprime}, and capability is measured in units of $d'^2$. All tests use new, independently adjudicated cases, which seeded errors can supply, scored for false alarms as well as misses. With \hoNerr{} error cases and \hoNclean{} clean cases, $d'$ has a standard error of about \hoSeDp{} at the illustration's operating point.
\end{minipage}
\end{table}

\paragraph{Define the failure and the remaining evidence.}
The organization should specify the class of errors, when detection must occur, and what information the reviewers would retain during the proposed interruption. Unaided detection should be assessed on the cases an interruption would bring: a checker that fails because work has drifted outside its coverage leaves people with unfamiliar cases. Testing without one checker does not establish independence from every other source of failure. For example, a reviewer and a second model may both rely on the same incorrect reference material.

\paragraph{Measure capability under the relevant conditions.}
Use previously unseen, independently adjudicated cases to assess the specified detection task. Where natural errors are too rare to supply them, seeded errors can: known errors of the specified class are planted in cases that are otherwise verified, as reactor requalification does with simulated abnormal events (Section~\ref{sec:existing_evidence}) and software testing does with planted faults \citep{demillo1978}. Seeded cases measure detection of the seeded classes, so the classes should be those the requirement names. Compare performance with ordinary assistance against performance with the relevant checking channel unavailable. Record false alarms as well as misses; rejecting everything would otherwise look like perfect detection. If the claim is about a delayed response, assess how long a decision takes and whether it arrives in time. The present scalar model measures detection capability rather than response latency, so a time limit belongs in its empirical application or an extended model.

\paragraph{Express the results in the model's units.}
The detection model is the equal-variance Gaussian model of signal detection theory \citep{greenswets1966,macmillan2005}, which \citet{sorkinwoods1985} applied to systems combining automated and human monitors, so its quantities can be estimated from test results. The sensitivity index
\begin{equation}
d'=z(\text{hit rate})-z(\text{false-alarm rate})=b_0\sqrt{I_s(K)},
\label{eq:dprime}
\end{equation}
where $z=\Phi^{-1}$ is the standard normal quantile function, recovers the total precision of the evidence. Because precision adds across independent sources, $d'^2=b_0^2(I_0+NK+G_s)$. Only these products matter, so the unit of capability is a choice; with $b_0^2N=1$, as in the illustration, capability is measured in units of $d'^2$, and $d'^2=b_0^2I_0+K+b_0^2G_s$. Testing the same people with and without the checker therefore separates the checker's contribution, the difference in $d'^2$, from the contribution of human capability and baseline evidence, the unaided $d'^2$. A requirement to detect a share $1-\varepsilon$ of errors at false-alarm rate $\alpha$ without the checker requires $d'^*=z(1-\alpha)+z(1-\varepsilon)$, and hence capability $K_d^*=[d'^{*2}-b_0^2I_0-b_0^2G_d]_+$, which is \eqref{eq:Kdstar} in these units. A higher false-alarm tolerance lowers $d'^*$, which is why accepting more false alarms during an interruption substitutes for capability (Section~\ref{sec:standard}).

In the illustration, the intermediate organization detects \hoHitWithout{} percent of errors without its checker and \hoHitWith{} percent with it, at a \hoFAExPct{} percent false-alarm rate. Its $d'$ is \hoDpWithout{} and \hoDpWith; the checker contributes \hoDpCheckerSq{} to $d'^2$, and human capability with baseline evidence contributes \hoDpHumanSq, of which the baseline accounts for \hoDpBaseSq{} and capability for \hoKB, the value in Table~\ref{tab:example}. Detecting \hoFloorExPct{} percent of errors at the same false-alarm rate would require $d'^*=\hoZfa+\hoZfloor=\hoDpStar$ without the checker, which corresponds to capability $\hoDpStar^2-\hoDpBaseSq=\hoKdStarEx$, about \hoKdRatio{} times what the organization holds.

Test size matters. By the standard approximation to the sampling variance of $d'$ \citep{gourevitch1967}, \hoNerr{} error cases and \hoNclean{} clean cases estimate this organization's $d'$ with a standard error of about \hoSeDp{} and its $d'^2$ with a standard error of about \hoSeDpSq. With \hoNcleanMore{} clean cases the standard error of $d'$ falls to \hoSeDpMore, because at a low false-alarm rate most of the uncertainty comes from the clean cases. Such a test separates the \hoDpHumanSq{} the organization holds from the \hoDpStarSq{} that a \hoFloorExPct{} percent requirement needs, but not a one-period loss of \hoLossOnePeriod; estimating retention needs longer intervals, more cases, or pooling across people. The method comes with two checks. If the checker's contribution to $d'^2$ rises with a person's unaided $d'$, expertise and the checker are complements in precision, and the substitution result may fail. If the slope of the receiver operating characteristic in $z$-coordinates differs from one, the equal-variance assumption fails, and an unequal-variance index should replace $d'$ \citep{macmillan2005}.

\paragraph{Test the checker as well as the people.}
The difference in $d'^2$ above is what the checker adds for these people. Scoring the checker's own output on the same cases gives what it could add, its own $d'^2$. Under the model the two are equal: the $d'^2$ of people using the checker equals their unaided $d'^2$ plus the checker's. The identity applies when the checker reports a graded score; with a pass/fail flag, part of any shortfall reflects the coarse output. Otherwise a shortfall shows that the two sources overlap or that people do not use the checker's evidence fully, for example because they defer to it. Its size measures what overlap and deference cost together, in the units of the model. An excess, like a checker contribution that rises with skill, points to complementarity, in which a combination of human and machine judgments beats either alone \citep{steyvers2022}. Repeating the checker's test on new cases drawn from current work also reveals a silent failure: a fall in its $d'$ shows that it has lost information, for instance because the work has drifted outside its validated coverage, before any outage is declared. From the moment such a failure is recognized, the reserve calculation applies.

\paragraph{Measure retention and effective formation.}
Repeat comparable assessments after different lengths of time in the deployed workflow. Track actual practice, diagnostic work, and training, rather than relying only on job titles or attendance. The task distinction in Section~\ref{sec:existing_evidence} can organize these records, but the learning value of an activity also depends on feedback and the person's involvement. Measure whether training improves later performance on new cases. This separates assisted output today from the ability available tomorrow. In the units of the tests, the accounting of Appendix~\ref{app:numerical} reads $y_{t+1}=q\,y_t+u_t$, where $y$ is the unaided $d'^2$ above the baseline and $u_t$ is what practice and training add. Regressing repeated scores on the previous score and on recorded practice and training therefore estimates the survival rate and the yield of each (Table~\ref{tab:measure}). A retest after an interval with little practice or training is the special case in which formation is zero. Each score carries the sampling error above, which biases the coefficient on the previous score toward zero. Scoring two independent halves of the case set, and using one half's score as an instrument for the other, removes the bias. With tests $k$ periods apart, the coefficient estimates $q^k$. Aviation shows why actual practice must be recorded: after the regulator encouraged manual flight, its auditor found that the agency could not determine how often pilots flew manually \citep{dotoig2016}.

\paragraph{Test the interval before help arrives.}
Document the time needed to restore the checker, develop competence, or bring in qualified support. Trainees' time to competence estimates the lag $L$, erring on the safe side, because new formation can begin to count before a trainee is fully competent. Identify which inflows of expertise are already committed. With $K_d^*$, $q$, and $L$ in hand, Proposition~\ref{prop:reserve} gives the stock the organization must hold when an interruption begins. In the units of the tests and without committed inflows, the unaided $d'^2$ above baseline at the onset of an interruption must be at least $q^{-(L-1)}K_d^*$. An exercise should ask whether the organization can meet its requirement throughout that interval, or whether it must defer some decisions. The purpose is to test the proposed fallback under specified conditions, not to extrapolate from an impressive demonstration with every resource available.

Exercises, rotations, and training are established responses to automation risks. Their value in this framework depends on whether they maintain the measured capability that the detection target requires. A checking arrangement could adopt the three elements of the reactor rules in Section~\ref{sec:existing_evidence}: a minimum volume of unaided checking, a periodic unaided test on adjudicated cases, and a supervised return for reviewers who fall below the minimum. The minimum would be set by the detection target rather than by hours. A successful training program could weaken or eliminate the predicted reduction in reserves after tool adoption. That would be evidence about the mechanism and a useful organizational outcome.

\section{An empirical study that could distinguish the mechanism}

A useful study would follow teams performing a recurring task for which independent adjudication is feasible, or in which errors can be seeded. A checking tool would be rolled out to teams in a randomized order, a stepped-wedge design \citep{hussey2007}: every team eventually receives it, and time since adoption varies across teams. Tasks and evaluation cases would be kept comparable, the order randomized across clusters of teams to limit spillovers, and the analysis would control for calendar time, because the share of teams with the tool rises over time by design. Baseline expertise would be measured before deployment. During deployment, the study would track reliance, practice, training, and routine performance. Controlled assessments would then remove the checking channel to evaluate unassisted detection. One test measures $d'^2$ only to about \hoSeDpSq{} with \hoNerr{} error and \hoNclean{} clean cases (Section~\ref{sec:measure}), so detecting declines of the size in the illustration, \hoLossOnePeriod{} per period, needs many cases per assessment, repeated assessments, or pooling across teams.

The prediction is conditional. Better checking should reduce later independent performance only where it reduces effective skill formation sufficiently, relative to depreciation. A separately randomized training intervention could test whether replacing the lost practice prevents that decline. In the model, holding effective formation fixed isolates the immediate information supplied by a checker from the consequences of adaptation. An empirical study can hold a training protocol fixed, but it must measure whether the resulting learning is comparable; equal training hours do not establish equal formation. Real disruptions alone cannot identify the mechanism, because system changes and difficult tasks may cause both disruptions and poor performance. Before-and-after comparisons of unassisted performance, as in the colonoscopy study, are a cheaper complement. They can document a decline, but not separate lost skill from changes in effort, case mix, or secular trends, which the randomized rollout is designed to do.

Three observations would be especially informative. If independent capability remains stable despite reduced practice, the assumed loss process may be weak in that task. If training maintains capability at modest cost, reliance on external checking need not undermine the reserve. If apparent independent checks share the same errors, the model's additive-information representation is inappropriate.

Studies of AI-assisted learning, the clinical and aviation records, and current practice rules motivate this research. The colonoscopy study documents a decline in unaided detection. The aviation record shows a regulator that encouraged practice without measuring it. Reactor rules already require practice and periodic tests, the conference records show evaluation activities expanding differently by role, and the task measures help describe changes in work and practice. What remains to be measured is the long-run path of task-specific detection capability and the effect of a reserve requirement; the next step is to connect organizational choices and task histories to independently assessed performance, retention, and recovery time.

\section{Possible applications to AI safety}
\label{sec:ai_scope}

One application is the review of research produced by AI systems. Some incorrect claims may appear plausible and require human judgment to detect. If AI systems also check that research, better automated checking lowers the return to human research-checking skill, as in Section~\ref{sec:implication_reserve}. Seeded errors (Section~\ref{sec:measure}) supply adjudicated cases here, for example leakage between training and test data, an inappropriate statistical test, or a sign error, planted in analyses that have been verified; they measure detection of the seeded classes, not of errors no one thought to seed. The clinical and aviation evidence above concerns ordinary automation and does not establish this application.

AI control protocols let a weaker trusted model select which outputs a limited budget of trusted human labor audits, and assume that an audit reveals a successful attack \citep{greenblatt2024}. The framework replaces this perfect audit with one whose accuracy depends on the auditors' maintained capability, so a protocol's safety can decline while its monitor and audit budget stay the same. Control evaluations already generate attacks, which serve as seeded errors: scoring the auditors on the outputs the monitor flags, and repeating the test over time, estimates their capability, its survival rate, and the lag in the units of Section~\ref{sec:measure}. The framework does not model an AI system that chooses when, where, or which errors to introduce, but its conditional requirements still speak to timing. The reserve of Section~\ref{sec:implication_buffer} and the floor of Section~\ref{sec:standard} must hold at every date at which an interruption could begin. Neither uses the probability of interruptions. Concentrating errors of the specified class in interruptions, whether by an attacker or through the common causes of Section~\ref{sec:standard}, therefore does not defeat them. It does defeat a policy chosen to minimize expected loss: in the illustration, the organization with the most informative checker misses \hoAvgMissC{} percent of errors on average but \hoMissOnsetC{} percent at the onset of an interruption after a long working spell, the moment a strategic attacker would choose (Appendix~\ref{app:illustration}). Under that floor it would detect at least \hoFloorTargetPct{} percent at any date. An adversarial guarantee needs more than this. An attacker can choose errors outside the specified class, make many attempts, and exceed the number of outputs people can inspect once the monitor that selects them is unavailable.

Safety cases for frontier AI may rest on control measures that include human monitoring and auditing \citep{clymer2024}. A safety case that lists human intervention as a mitigation should state which objective it adopts and how the required capability $K_d^*$ is maintained, including during the lead time $L$ when an automated monitor is unavailable. Control evaluations count any attack that gets through, so the target applies to each attempt and is strict. The reserve arithmetic shows its cost: in the illustration, detecting \hoStrictPct{} percent of attempts during an interruption requires capability \hoKdStrict{} at a \hoFAExPct{} percent false-alarm rate, \hoKdStrictRatio{} times what the intermediate organization holds. At \hoOpFAHigh{} percent it requires \hoKdStrictHigh. The other responses of Section~\ref{sec:standard}, a second checking channel, a delayed decision, or a suspended service, then carry more of the protection. Control protocols already rely on them, for example by submitting the trusted model's output when suspicion is high.

\section{Conclusion}

The framework concerns a recognized interruption of a checking channel while a specified class of harmful errors remains relevant. The errors are non-adversarial: neither their class nor the availability of checking channels responds to the organization's current ability to detect them, and they arrive at the same rate whether or not the checker works. Section~\ref{sec:standard} notes what changes when they do not. It also assumes that human capability can provide useful independent evidence, that some capability can be maintained through practice or training, and that formation takes time. It does not establish protection against unknown failure modes or errors beyond human evaluation. A failure of the checker enters once it is recognized, and the tests in Section~\ref{sec:measure} can reveal a silent one. The model does not describe how people use a working checker, including automation bias, but the same tests measure its effect. Where detection relies on several distinct skills, each can be treated as its own detection task, with its own requirement, survival rate, and lag. Turnover enters the accounting: departures of trained staff lower the survival rate, and experienced hires are inflows. The framework does not model situation awareness, organizational coordination, or why people leave.

Even within this setting, better technology need not erode useful skill: human and automated checks may be complementary, and training may replenish what routine work no longer supplies. The substitution result requires condition \eqref{eq:substitution}: a higher working precision must decrease the error reduction gained from an extra unit of human skill. This holds at every precision when the false-alarm rate is above \hoAlphaCrit, and at any rate when the working checker, with only the baseline evidence, detects at least half of the errors. The reserve calculation, by contrast, is conditional on a chosen target and an assumed law of capacity loss and replenishment; it does not require proving that AI adoption caused the initial stock to decline.

The contribution is to connect the verification capacity an organization needs during an interruption to the activities that keep it available. Ordinary accuracy, average uptime, reviewer headcount, and promised retraining cannot stand in for that connection. A claim about verification capacity becomes informative when it states what must be detected, what evidence remains when a check is lost, and how the required capacity is maintained until a response takes effect.

\input{references_general.tex}
\appendix
\numberwithin{equation}{section}

\section{Detection, Substitution, and the Reserve Requirement}
\label{app:reserve}

This appendix states the detection model, the condition under which human capability and automated checking are substitutes, and the reserve requirement behind Section~\ref{sec:implication_buffer}. Everything stated here is proved in this paper; an infinite-horizon version of the model is developed in \citet{ganprocuring2026}.

\subsection{Gaussian detection}

A specified error is either absent, $\theta=0$, or present with magnitude $\theta=b_0>0$. The checking arrangement observes a summary $y\sim\mathcal N(\theta,1/I_s(K))$ whose precision combines three independent sources,
\begin{equation}
I_s(K)=I_0+NK+G_s,
\end{equation}
where $I_0>0$ is baseline evidence, $K\ge0$ is task-specific human capability applied over $N\ge1$ inspection steps, and $G_s\ge0$ is the automated checker's precision in state $s\in\{w,d\}$, working or disrupted. With a fixed false-alarm probability $\alpha\in(0,1/2)$, the most powerful test rejects $\theta=0$ when $y\ge a/\sqrt{I_s(K)}$, where $a=\Phi^{-1}(1-\alpha)$. The miss probability is
\begin{equation}
m_s(K)=\Phi\left(a-b_0\sqrt{I_s(K)}\right),
\label{eq:miss}
\end{equation}
and the detection probability is $1-m_s(K)$. The fixed false-alarm rate matters: without a limit or cost on false alarms, rejecting everything would detect every error.

\subsection{Substitution between evidence channels}
\label{app:submodularity}

\begin{lemma}
\label{lem:substitution}
Let $x=b_0\sqrt{I_s(K)}$. Then
\begin{equation}
\frac{\partial^2 m_s}{\partial K\,\partial G_s}=\frac{b_0N\phi(a-x)}{4I_s(K)^{3/2}}\,\bigl(x^2-ax+1\bigr),
\qquad
\frac{\partial^2 m_s}{\partial K^2}=N\,\frac{\partial^2 m_s}{\partial K\,\partial G_s}.
\label{eq:crosspartial}
\end{equation}
If $0<a<2$, both are strictly positive: the miss probability is strictly convex in capability, and a higher checker precision $G_s$ strictly decreases the error reduction gained from an extra unit of capability $K$.
\end{lemma}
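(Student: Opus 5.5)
The plan is a direct calculation. Every dependence on $K$ and $G_s$ runs through the single scalar $I=I_s(K)=I_0+NK+G_s$, so I would first record the chain-rule fact that makes the second identity in \eqref{eq:crosspartial} immediate. Since $\partial I/\partial K=N$ and $\partial I/\partial G_s=1$, any smooth function $f(I)$ satisfies $\partial f/\partial K=N\,\partial f/\partial G_s$. Applying this once to $m_s$, and then once more to $\partial m_s/\partial K$, gives $\partial^2 m_s/\partial K^2=N\,\partial^2 m_s/\partial K\,\partial G_s$. After that, only the cross-partial has to be computed.

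For the cross-partial, write $u=a-x$ with $x=b_0\sqrt{I}$. Then $\partial x/\partial G_s=b_0/(2\sqrt I)=x/(2I)$, and
\[
\frac{\partial m_s}{\partial K}=-\phi(a-x)\,\frac{N b_0}{2}\,I^{-1/2}.
\]
Differentiating this product in $G_s$ involves two terms:
\begin{itemize}
\item The factor $\phi(u)$: using $\phi'(u)=-u\phi(u)$ and $\partial u/\partial G_s=-x/(2I)$, its derivative is $(a-x)\,x\,\phi/(2I)$.
\item The factor $I^{-1/2}$: its derivative is $-\tfrac12 I^{-3/2}$.
\end{itemize}
Collecting both terms over the common factor $N b_0\phi/(4I^{3/2})$ leaves the bracket $1-(a-x)x=x^2-ax+1$. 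This is the displayed formula.

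For the sign, the prefactor $b_0N\phi(a-x)/(4I^{3/2})$ is strictly positive because $b_0>0$, $N\ge1$, $I\ge I_0>0$ and $\phi>0$. So the sign is that of the quadratic $x^2-ax+1$. Its discriminant is $a^2-4$, which is negative exactly when $|a|<2$. We always have $a>0$, since $\alpha<1/2$, so the hypothesis $0<a<2$ makes the quadratic strictly positive for every real $x$. In particular it is positive for $x=b_0\sqrt{I}>0$, and both second derivatives are strictly positive.

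The interpretation follows directly:
\begin{itemize}
\item A positive $\partial^2 m_s/\partial K^2$ is strict convexity of $m_s$ in $K$.
\item A positive cross-partial means $\partial m_s/\partial K<0$ increases toward zero as $G_s$ rises. So the marginal error reduction $-\partial m_s/\partial K$ strictly falls.
\end{itemize}

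The only step needing care is the bookkeeping in the product-rule differentiation, in particular the sign from $\phi'(u)=-u\phi(u)$ combined with $\partial u/\partial G_s<0$. Everything else is routine. I would also note, for later use, that outside $0<a<2$ the same formula shows the sign fails only for $x$ between the roots of $x^2-ax+1$. Both roots lie below $a$, so the sign holds whenever $x\ge a$, that is, whenever detection is at least one half. This is the case invoked before Proposition~\ref{prop:checker}.
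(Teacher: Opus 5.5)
Your proposal is correct and is essentially the paper's proof: it uses the same product-rule differentiation of $\partial m_s/\partial K=-\phi(a-x)\,b_0N/(2\sqrt{I_s})$ with $\phi'(u)=-u\phi(u)$, the same observation that $K$ and $G_s$ enter only through $I_0+NK+G_s$ (which gives the factor $N$), and the same positivity of $x^2-ax+1$, since your discriminant argument is the paper's completed square $(x-a/2)^2+1-a^2/4$. Your closing remark that the sign holds whenever $x\ge a$ matches the paper's later point that complementarity requires detection below one half; strictly, the quadratic also vanishes at the roots, so the excluded set is the closed interval between them.
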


\begin{proof}
Write $\zeta=a-x$. Then $\partial m_s/\partial K=-\phi(\zeta)\,b_0N/(2\sqrt{I_s})$. Differentiating with respect to $G_s$, using $\phi'(\zeta)=-\zeta\phi(\zeta)$ and $\partial \zeta/\partial G_s=-b_0/(2\sqrt{I_s})$, gives the first expression, because $1-b_0\zeta\sqrt{I_s}=x^2-ax+1$. Capability enters $I_s$ through $NK$ and the checker enters one for one, so differentiating twice in $K$ multiplies the same expression by $N$. For $0<a<2$, $x^2-ax+1=(x-a/2)^2+1-a^2/4>0$.
\end{proof}

Both derivatives are multiples of one curvature. Capability and the checker enter only through total precision, so
\[
\frac{\partial^2 m_s}{\partial K\,\partial G_s}=N\,\frac{\partial^2 m_s}{\partial I_s^2},\qquad
\frac{\partial^2 m_s}{\partial K^2}=N^2\,\frac{\partial^2 m_s}{\partial I_s^2}.
\]
Substitution between the channels and convexity in capability are therefore the same property, diminishing returns to evidence, and complementarity means increasing returns. Increasing returns arise only where detection is below one half. The roots of $x^2-ax+1$ satisfy $x_\ell x_h=1$ and $x_\ell+x_h=a$, so detection at the upper root is $\Phi(x_h-a)=\Phi(-x_\ell)<1/2$. When detection is low under a strict false-alarm limit, the mean of the test statistic lies below the threshold, and more evidence of either kind moves it toward the threshold, where further evidence adds the most.

The condition $a<2$ is equivalent to $\alpha>1-\Phi(2)\approx\hoAlphaCrit$. It holds at the illustration's $\alpha=\hoAlpha$ but is not universal. At $\alpha=\hoAlphaStrict$, $a=\hoAStrict$ and $x^2-ax+1<0$ for $\hoCompXloStrict<x<\hoCompXhiStrict$; with $b_0=\hoBzero$, this is total precision between \hoCompIloStrict{} and \hoCompIhiStrict, where detection lies between \hoCompPloStrict{} and \hoCompPhiStrict{} percent.

The results in Appendix~\ref{app:numerical} use the property in two places, which then differ. In the disrupted state the checker contributes nothing, so the relevant use is convexity in capability, which Proposition~\ref{prop:states} needs. At the capability levels in the lower two rows of Table~\ref{tab:example}, disrupted-state precision is \hoIdisrB{} and \hoIdisrC, inside the range, so the miss probability is concave in capability there and that convexity fails. In the working state, where the checker's precision varies, the relevant use is substitution, which is all that Propositions~\ref{prop:checker} and~\ref{prop:floor} need. Complementarity requires total precision below \hoCompIhiStrict, which arises only when the checker is weak: never with $G_w$ of \hoGB{} or \hoGC, and with $G_w=\hoGA$ only for capability below \hoCompKhiA. Propositions~\ref{prop:checker} and~\ref{prop:floor} therefore hold at $\alpha=\hoAlphaStrict$ for improvements from $G_w=\hoGB$, a checker that with only the baseline evidence detects \hoDetBaseStrictB{} percent of errors. At $\alpha=\hoAlphaStricter$ the range extends to \hoCompIhiStricter, with detection up to \hoCompPhiStricter{} percent. In general, the working state lies above the range whenever the working checker, with only the baseline evidence, detects at least half of the errors, because detection at the upper end of the range is below one half.

\subsection{The reserve requirement}
\label{app:buffer_proof}

Suppose a requirement says that, while the checker is disrupted, the error must be detected with probability at least $1-\varepsilon$, where $0<\varepsilon\le1/2$. From \eqref{eq:miss}, $m_d(K)\le\varepsilon$ requires total precision $I^*$ and hence capability $K_d^*$, where
\begin{equation}
I^*=\left[\frac{\Phi^{-1}(1-\alpha)+\Phi^{-1}(1-\varepsilon)}{b_0}\right]^2,
\qquad
K_d^*=\frac{[I^*-I_0-G_d]_+}{N},
\label{eq:Kdstar}
\end{equation}
and $[x]_+=\max\{0,x\}$. If $I_0+G_d\ge I^*$, the remaining evidence meets the standard and $K_d^*=0$.

\begin{proposition}[Reserve before new formation matures]
\label{prop:reserve}
Suppose an interruption begins at date $t$ and any newly chosen formation, through retained practice or training, first contributes at date $t+L$, where $L\ge1$. Let $v_0,\ldots,v_{L-2}\ge0$ be inflows committed before the interruption and arriving at dates $t+1,\ldots,t+L-1$, and let capability otherwise survive at rate $q=1-\delta\in(0,1)$. For an interruption that lasts at least until $t+L-1$, capability stays at or above $K_d^*$ at every date $t,\ldots,t+L-1$ if and only if
\begin{equation}
K_t\ \ge\ \max_{0\le j<L}\ q^{-j}\Big[K_d^*-\sum_{\ell=0}^{j-1}q^{j-1-\ell}v_\ell\Big]_+ .
\label{eq:reserve}
\end{equation}
Without committed inflows, the condition is $K_t\ge q^{-(L-1)}K_d^*$.
\end{proposition}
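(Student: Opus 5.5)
The plan is to write the capability path during the interruption explicitly and reduce the requirement to one linear inequality per date. Before date $t+L$ no newly chosen formation counts, so the only inflows are the committed ones. By induction on $j$, for $0\le j\le L-1$,
\[
K_{t+j}=q^{j}K_t+\sum_{\ell=0}^{j-1}q^{j-1-\ell}v_\ell ,
\]
with the convention that the sum is empty at $j=0$. The base case $j=0$ is trivial. The inductive step uses the accounting law $K_{t+j+1}=qK_{t+j}+v_j$, valid for $j\le L-2$, which is exactly the committed-inflow timing in the statement. The assumption that the interruption lasts at least until $t+L-1$ ensures that each of these dates is a disrupted date, so the relevant requirement at each of them is $K_{t+j}\ge K_d^*$.

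Next, the condition at each date is rearranged into a lower bound on $K_t$. For fixed $j$, $K_{t+j}\ge K_d^*$ is equivalent to
\[
K_t\ge q^{-j}\Big[K_d^*-\sum_{\ell=0}^{j-1}q^{j-1-\ell}v_\ell\Big],
\]
since $q^j>0$. Because $K_t\ge0$ always, this bound may be replaced by its positive part without changing the set of admissible $K_t$. If the committed inflows alone already cover the requirement at date $t+j$, the constraint is vacuous, and the bound with $[\cdot]_+$ equals zero.

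Requiring the inequality at every $j=0,\dots,L-1$ is the same as requiring $K_t$ to be at least the maximum of these bounds. That is \eqref{eq:reserve}, and both directions of the equivalence follow because each step was an equivalence.

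For the special case $v_\ell\equiv0$, the bound at date $j$ becomes $q^{-j}K_d^*$. Since $q\in(0,1)$, this is nondecreasing in $j$, so the maximum is attained at $j=L-1$, giving $K_t\ge q^{-(L-1)}K_d^*$.

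There is no substantive obstacle. The only care needed is over timing: formation chosen at or after $t$ must not enter any date before $t+L$, and the indexing must align $v_\ell$ with date $t+\ell+1$ so that the geometric weights $q^{j-1-\ell}$ come out right. I would state the accounting law used and check the indices at $j=1$ explicitly.
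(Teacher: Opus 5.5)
Your proposal is correct and follows the paper's proof essentially step by step. You write $K_{t+j}=q^jK_t+\sum_{\ell=0}^{j-1}q^{j-1-\ell}v_\ell$ for $0\le j<L$, divide each date's requirement by $q^j$, collect the conditions into the maximum, and let the no-inflow case bind at $j=L-1$; your remarks that the positive part is harmless because $K_t\ge0$, and that the last step uses $K_d^*\ge0$, make explicit points the paper leaves implicit.
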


\begin{proof}
Before new formation matures, $K_{t+j}=q^jK_t+\sum_{\ell=0}^{j-1}q^{j-1-\ell}v_\ell$ for $0\le j<L$. Requiring each of these stocks to be at least $K_d^*$ and dividing by $q^j$ gives one condition per date, and \eqref{eq:reserve} collects them. With no committed inflows, the condition for date $t+j$ is $K_t\ge q^{-j}K_d^*$, which is most demanding at $j=L-1$.
\end{proof}

Two assumptions do the work. Manual checking during the interruption is assumed not to add effective capability before $t+L$; if it does, the requirement falls. The interruption is assumed to last until $t+L-1$; a shorter one needs cover only while it lasts. The decision problem in Appendix~\ref{app:numerical} uses $L=1$: formation counts from the next period. The proposition extends the accounting to longer lags without solving the optimal investment policy for them.

\section{The Decision Problem and the Numerical Illustration}
\label{app:numerical}

This appendix proves the paper's formal results in a two-period version of the organization's decision, using only the properties of the detection model in Appendix~\ref{app:reserve}. It then describes the infinite-horizon version that produces the numerical illustration.

\subsection{A two-period decision}

Today the checker is working ($s=w$) or disrupted ($s=d$); a disrupted checker supplies less information, $G_d<G_w$ ($G_d=0$ in the illustration). The organization holds capability $K$. It chooses new formation $u\ge0$, which costs $C(u)$ and counts from tomorrow:
\begin{equation}
K'=qK+u,\qquad q=1-\delta.
\label{eq:formation}
\end{equation}
Tomorrow the checker is disrupted with probability
\begin{equation}
P_{sd}=(1-\gamma)\pi_d+\gamma\,\mathbf 1\{s=d\},
\label{eq:transition}
\end{equation}
and working with probability $P_{sw}=1-P_{sd}$. Here $\pi_d\in(0,1)$ is the long-run share of disrupted periods, and $\gamma\in[0,1)$ measures persistence: with $\gamma=0$ tomorrow's state does not depend on today's, and a larger $\gamma$ makes a disruption today more predictive of one tomorrow. Both $P_{ww}$ and $P_{dw}$ are positive. Tomorrow's expected harm from missed errors is $R\,m_{s'}(K')$, discounted by $\beta$; today's harm does not depend on the choice.

\paragraph{Formation cost.}
Forming capability $u\ge0$ costs $C(u)$, where $C$ is convex and continuously differentiable, with $C(0)=0$ and first-unit cost $C'(0)=\kappa>0$. The simplest example is $C(u)=\kappa u+\tfrac c2 u^2$; the illustration uses the least cost of combining practice and training (Appendix~\ref{app:illustration}). The organization chooses $u\ge0$ to minimize
\begin{equation}
F_s(u)=C(u)+\beta R\sum_{s'}P_{ss'}\,m_{s'}(qK+u).
\label{eq:twoperiod}
\end{equation}
An organization that bears only the share $\chi<1$ of the harm replaces $R$ by $\chi R$. If errors are more frequent while the checker is disrupted, replace $R$ by $R_{s'}$ inside the sums: with $R_d\ge R_w$ the results below still hold, because Propositions~\ref{prop:checker} and~\ref{prop:floor} involve only the working-state term and the slope used for Proposition~\ref{prop:states} becomes $\beta\gamma(R_dh_d-R_wh_w)>0$.

\paragraph{Comparing two decisions.}
The results compare the choices made under two objectives, $F$ and $F^+$, that differ in one respect. Each has a minimizer, because $C$ grows without bound while expected harm is bounded. Let $D(u)=F^+(u)-F(u)$, and suppose that $D$ has a positive slope. Then every minimizer $u^+$ of $F^+$ is at most every minimizer $u$ of $F$: since $F^+(u^+)\le F^+(u)$ and $F(u)\le F(u^+)$, adding the two inequalities gives $D(u^+)\le D(u)$, and $D$ rises strictly, so $u^+\le u$. If moreover $u>0$, then $F'(u)=0<D'(u)=F^{+\prime}(u)$, so $u$ does not minimize $F^+$, and $u^+<u$. The comparison needs no convexity.

\paragraph{The optimal choice.}
Let $h_{s'}(K)=-m_{s'}'(K)$ be the reduction in misses from a marginal unit of capability, and let
\begin{equation}
MB_s(u)=\beta R\sum_{s'}P_{ss'}\,h_{s'}(qK+u)
\label{eq:mb}
\end{equation}
be the marginal benefit of formation. Suppose $0<a<2$. By Lemma~\ref{lem:substitution}, each $m_{s'}$ is strictly convex, so $MB_s$ falls as $u$ rises, while the marginal cost $C'$ does not fall. The objective is therefore strictly convex. Its unique minimizer $u_s$ is zero if $MB_s(0)\le\kappa$; otherwise it is the single point where $C'(u)=MB_s(u)$, which exists because $MB_s$ falls toward zero as capability grows while $C'\ge\kappa>0$. With the quadratic cost, the condition reads $\kappa+cu=MB_s(u)$. Under the same condition, extra capability reduces misses by more when the checker is disrupted: $h_d(K)>h_w(K)$ at every $K$. The miss probability depends on $K$ and $G_s$ only through total precision, and it is convex in total precision (Appendix~\ref{app:submodularity}), so a unit of capability reduces misses by more when total precision is lower. The disrupted state has lower total precision at every $K$.

\subsection{Results}
\label{app:persistence_proof}

\begin{proposition}[Better routine checks]
\label{prop:checker}
Suppose that a higher working precision strictly decreases the error reduction gained from an extra unit of capability, $\partial^2 m_w/\partial K\,\partial G_w>0$ at every capability level, as in \eqref{eq:substitution}. By Lemma~\ref{lem:substitution}, this holds when $0<a<2$, and for any $a$ when the working checker, with only the baseline evidence, detects at least half of the errors. Then raising $G_w$ while holding $G_d$ fixed weakly lowers optimal formation in both states and at every capability level, strictly when formation is positive. Tomorrow's capability is therefore weakly lower, and so is detection if the checker is then disrupted. The organization's optimized expected cost falls.
\end{proposition}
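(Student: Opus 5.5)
The plan is to apply the monotone-comparison argument stated just before the proposition ("Comparing two decisions"). Fix the current state $s\in\{w,d\}$ and the stock $K$. Let $F$ be the objective \eqref{eq:twoperiod} with working precision $G_w$, and $F^+$ the same objective with $G_w^+>G_w$, everything else held fixed. The two objectives differ only in the working-state term:
\[
D(u)=F^+(u)-F(u)=\beta R\,P_{sw}\bigl[m_w^+(qK+u)-m_w(qK+u)\bigr],
\]
where $m_w^+$ is the miss probability at $G_w^+$. The formation cost $C$ and the disrupted term cancel, because $G_d$ is unchanged.

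\textbf{Step 1: $D$ is strictly increasing.} We have $D'(u)=\beta R P_{sw}\int_{G_w}^{G_w^+}\partial^2 m_w/\partial K\,\partial G_w\,dG$, evaluated at $K'=qK+u$. The integrand is strictly positive by hypothesis \eqref{eq:substitution}, and $P_{sw}>0$ in both states by assumption. The comparison lemma then gives $u^+\le u$ for all minimizers, with strict inequality when $u>0$. That lemma needs no convexity, so the conclusion covers the case where Lemma~\ref{lem:substitution} fails in the disrupted state, as in the $\alpha=\hoAlphaStrict$ discussion.

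The step I expect to need the most care is justifying the positive cross partial at every capability level, along the whole path of precisions between $G_w$ and $G_w^+$. When $0<a<2$ this is immediate from Lemma~\ref{lem:substitution}. For general $a$, I would use the root argument after that lemma. The factor $x^2-ax+1$ is negative only on $(x_\ell,x_h)$, where detection is below one half. Total precision in the working state is at least $I_0+G_w$, which is increasing in $G_w$. So if the checker with baseline evidence alone already detects at least half of the errors at $G_w$, every $K\ge0$ and every $G\in[G_w,G_w^+]$ lies at or above $x_h$. The cross partial can vanish only at the single point $x=x_h$. That gives weak positivity pointwise, but the integral is still strictly positive because $x$ varies strictly with $G$. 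This is enough for $D$ to be strictly increasing.

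\textbf{Step 2: the remaining conclusions.}

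\emph{Capability.} Tomorrow's capability is $K'=qK+u$, which is weakly lower because $u^+\le u$.

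\emph{Detection if disrupted.} If the checker is then disrupted, detection is $1-m_d(K')$. Since $m_d$ is decreasing in $K$ and $G_d$ is unchanged, this detection is weakly lower.

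\emph{Optimized cost.} I would use an envelope-free argument. Since $m_w$ is decreasing in $G_w$, we have $F^+(u)\le F(u)$ for every $u$. Hence $\min F^+\le F^+(u^*)\le F(u^*)=\min F$, where $u^*$ is the original optimum. The inequality is strict because $P_{sw}>0$ and $m_w$ strictly decreases in precision.
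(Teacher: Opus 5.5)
Your proposal is correct and follows the paper's proof step for step. It uses the same difference $D$ and gets its positive slope from the cross-partial; your integral over $[G_w,G_w^+]$ just makes explicit the paper's passage from the cross-partial to $h_w^+<h_w$. It then applies the same no-convexity comparison and uses $D<0$ everywhere for the cost. Your extra care about the cross-partial vanishing at $x=x_h$ is unnecessary, though. Baseline detection of at least one half gives $x\ge a$ for every $K\ge0$ and every $G\in[G_w,G_w^+]$, and then $x^2-ax+1=x(x-a)+1\ge1$, so the integrand is strictly positive throughout.
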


\begin{proof}
The second condition suffices because complementarity requires total precision inside the range of Appendix~\ref{app:submodularity}, where detection is below one half; if detection is at least one half at zero capability, every working-state precision lies above that range. Raise $G_w$ to $G_w^+$, and mark the new objective and choice with a plus. The objective changes by
\[
D(u)=F_s^+(u)-F_s(u)=\beta R\,P_{sw}\big[m_w^+(qK+u)-m_w(qK+u)\big],
\]
whose slope is $\beta R\,P_{sw}\big[h_w(qK+u)-h_w^+(qK+u)\big]$. The positive cross-partial means that the better checker reduces misses by less from each unit of capability, $h_w^+<h_w$, and $P_{sw}>0$ in both states, so $D$ has a positive slope, and the comparison gives the result for formation. Tomorrow's capability $qK+u_s$, and detection without the checker, $1-m_d(qK+u_s)$, fall with it. For the cost, $D(u)<0$ at every $u$, so the minimum of the objective falls.
\end{proof}

\begin{proposition}[Persistence]
\label{prop:states}
Let $0<a<2$ and hold $\pi_d$ fixed.
\begin{enumerate}
\item[(a)] If $\gamma=0$, optimal formation is the same in both states.
\item[(b)] If $\gamma>0$, it is weakly higher when the checker is disrupted today, at every capability level, and strictly higher whenever it is positive in the disrupted state.
\item[(c)] An organization bearing the share $\chi$ of the harm forms no capability, at any capability level or state, if and only if $\chi\le\chi_c$, where
\[
\chi_c=\frac{\kappa}{\beta R\sum_{s'}P_{ds'}\,h_{s'}(0)};
\]
and $\chi_c$ falls as $\gamma$ rises.
\end{enumerate}
\end{proposition}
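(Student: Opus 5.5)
The plan is to reuse the comparison device from the two-period analysis. Two objectives that differ by a function $D$ with positive slope have ordered minimizers. Here the two objectives are $F_d$ and $F_w$ at the same capability $K$. From \eqref{eq:transition}, $P_{dd}-P_{wd}=\gamma$ and $P_{dw}-P_{ww}=-\gamma$, so
\[
F_w(u)-F_d(u)=\beta R\gamma\big[m_w(qK+u)-m_d(qK+u)\big].
\]
Its derivative in $u$ is $\beta R\gamma\big[h_d(qK+u)-h_w(qK+u)\big]$.

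For part (a), $\gamma=0$ makes the two objectives identical, so the minimizers coincide. Strict convexity under $0<a<2$ makes each unique.

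For part (b), I will invoke the fact established after \eqref{eq:mb}: under $0<a<2$, $h_d(K)>h_w(K)$ at every $K$. It follows from convexity of $m$ in total precision (Lemma~\ref{lem:substitution}) together with $G_d<G_w$. So $D=F_w-F_d$ has strictly positive slope when $\gamma>0$. The comparison lemma, with $F=F_d$ and $F^+=F_w$, then gives $u_w\le u_d$. When $u_d>0$, the same lemma gives $u_w<u_d$.

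For part (c), with harm share $\chi$, each objective is strictly convex. So $u_s=0$ exactly when $MB_s(0)\le\kappa$, where $MB_s(0)=\chi\beta R\sum_{s'}P_{ss'}h_{s'}(qK)$.
\begin{itemize}
\item No formation at any $K$ requires checking the supremum over $K\ge0$. Since $h_{s'}$ is decreasing (convexity), that supremum is attained at $K=0$.
\item Over states, it is attained at $s=d$, because $MB_d(0)-MB_w(0)=\chi\beta R\gamma\,[h_d-h_w]\ge0$.
\end{itemize}
Hence formation is zero everywhere iff $\chi\beta R\sum_{s'}P_{ds'}h_{s'}(0)\le\kappa$, that is, $\chi\le\chi_c$. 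Finally, with $\pi_d$ fixed, the denominator equals
\[
\beta R\big[h_w(0)+P_{dd}(h_d(0)-h_w(0))\big],\qquad P_{dd}=\pi_d+\gamma(1-\pi_d).
\]
This is strictly increasing in $\gamma$ because $h_d(0)>h_w(0)$ and $\pi_d<1$. So $\chi_c$ falls as $\gamma$ rises.

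The main obstacle I expect is in (c): justifying that the worst case over capability is $K=0$. This needs $h_{s'}$ to be decreasing in $K$ on all of $[0,\infty)$, which is exactly the strict convexity from Lemma~\ref{lem:substitution} under $0<a<2$. I also need to confirm that $u_s=0$ iff $MB_s(0)\le\kappa$ when $C$ is only convex and $C^1$. That holds because $F_s$ is strictly convex, so $F_s'(0)\ge0$ characterizes the corner.
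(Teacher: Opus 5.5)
Your proposal is correct and follows the paper's proof essentially step for step. For (a) and (b) you feed the same difference $F_w-F_d=\beta R\gamma\,[m_w-m_d]$, whose slope is positive because $h_d>h_w$, into the comparison device. For (c) you make the same reduction to the first-unit marginal benefit, maximized at $K=0$ in the disrupted state, with $P_{dd}$ rising in $\gamma$; writing the denominator as $h_w(0)+P_{dd}\,(h_d(0)-h_w(0))$ only spells out the paper's monotonicity step.
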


\begin{proof}
By \eqref{eq:transition}, $P_{dd}-P_{wd}=\gamma$, so the objectives in the two states differ by
\[
F_d(u)-F_w(u)=\beta R\,\gamma\,\big[m_d(K')-m_w(K')\big],\qquad K'=qK+u.
\]
If $\gamma=0$, the two problems coincide, which gives (a). If $\gamma>0$, the slope of $F_w-F_d$ is $\beta R\,\gamma\,[h_d(K')-h_w(K')]>0$, because $h_d>h_w$, and the comparison gives (b). For (c), the marginal benefit of a first unit, $\chi\beta R\sum_{s'}P_{ss'}h_{s'}(qK)$, is largest at $K=0$, because $h_{s'}$ falls with capability, and it is largest in the disrupted state, where it exceeds the working-state value by $\chi\beta R\,\gamma\,[h_d(qK)-h_w(qK)]\ge0$. Because the objective is strictly convex, the organization forms no capability anywhere exactly when this largest value is at most the cost of the first unit, $C'(0)=\kappa$, which is the condition $\chi\le\chi_c$. As $\gamma$ rises, $P_{dd}=(1-\gamma)\pi_d+\gamma$ rises, and since $h_d(0)>h_w(0)$, the denominator of $\chi_c$ rises and $\chi_c$ falls.
\end{proof}

\begin{proposition}[The price of a floor]
\label{prop:floor}
Under the condition of Proposition~\ref{prop:checker}, suppose tomorrow's capability must reach a floor $\bar K$, such as $K_d^*$ in \eqref{eq:Kdstar}, so that detection without the checker meets its target. Let $\Delta$ be the resulting increase in the optimized expected cost. Then $\Delta=0$ when unconstrained formation already reaches the floor, and $\Delta$ weakly increases with $G_w$.
\end{proposition}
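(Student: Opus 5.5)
Work in the two-period problem \eqref{eq:twoperiod} for a given state $s$ and stock $K$. Write the unconstrained problem as $\min_{u\ge0}F_s(u)$ with value $V$ and minimizer $u_s$. The floor requires $qK+u\ge\bar K$, that is $u\ge\underline u:=[\bar K-qK]_+$, so the constrained value is $V^c=\min_{u\ge\underline u}F_s(u)$. Then $\Delta=V^c-V\ge0$.

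\emph{First claim.} If $u_s\ge\underline u$, the unconstrained minimizer is feasible for the constrained problem, so $V^c=V$ and $\Delta=0$.

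\emph{Monotonicity in $G_w$.} Raise $G_w$ to $G_w^+$ and mark the new objects with a plus. By Proposition~\ref{prop:checker}, $u_s^+\le u_s$. I would express $\Delta$ through the convex structure rather than compare values directly. Under the condition of Proposition~\ref{prop:checker}, $F_s$ is strictly convex (Lemma~\ref{lem:substitution} when $0<a<2$; otherwise I would argue with the comparison device only, see below). Hence the constrained minimizer is $\max\{u_s,\underline u\}$, and
\[
\Delta=F_s(\max\{u_s,\underline u\})-F_s(u_s)=\int_{u_s}^{\max\{u_s,\underline u\}}F_s'(v)\,dv .
\]
The floor, and hence $\underline u$, does not depend on $G_w$, since $K_d^*$ uses only $G_d$. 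Two effects then both raise $\Delta$ when $G_w$ increases.
\begin{itemize}
\item The lower endpoint moves down, $u_s^+\le u_s$, while the integrand is nonnegative on $[u_s,\infty)$ by convexity. On $[u_s^+,u_s]$ the integrand $F_s^{+\prime}\ge0$ as well, so the extra stretch adds a nonnegative amount.
\item On the common range, $F_s^{+\prime}(v)-F_s'(v)=D'(v)=\beta RP_{sw}[h_w(qK+v)-h_w^+(qK+v)]>0$, exactly the slope computed in the proof of Proposition~\ref{prop:checker}.
\end{itemize}
So
\[
\Delta^+=\int_{u_s^+}^{\max\{u_s^+,\underline u\}}F_s^{+\prime}\ \ge\ \int_{u_s}^{\max\{u_s,\underline u\}}F_s'=\Delta .
\]
The cases where $\underline u$ lies below $u_s^+$, between $u_s^+$ and $u_s$, or above $u_s$ are checked separately; each is immediate.

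\emph{Main obstacle.} The difficulty is dispensing with convexity when $a\ge2$, since Proposition~\ref{prop:checker} then assumes only the cross-partial sign. I would first try a direct value comparison that avoids derivatives. Let $u^c$ be a constrained minimizer under $G_w$. By the comparison argument in the appendix, applied to minimizers on $[\underline u,\infty)$, the constrained minimizer $u^{c+}$ under $G_w^+$ satisfies $u^{c+}\le u^c$, and similarly $u_s^+\le u_s$. Then
\begin{align*}
\Delta^+-\Delta&=[F^+(u^{c+})-F(u^c)]-[F^+(u_s^+)-F(u_s)]\\
&\ge [F^+(u^{c+})-F(u^{c+})]-[F^+(u_s^+)-F(u_s^+)]\\
&=D(u^{c+})-D(u_s^+).
\end{align*}
The inequality holds because $F(u^c)\le F(u^{c+})$ and $F^+(u_s^+)\le F^+(u_s)$ would give the wrong direction, so I will choose the bounding points carefully: use $F(u^c)\le F(u^{c+})$ and $F(u_s)\le F(u_s^+)$? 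Getting the inequality directions right is the crux. The intended conclusion is a bound of the form $\Delta^+-\Delta\ge D(u^{c+})-D(u_s)$. This is nonnegative once $u^{c+}\ge u_s$ in the binding case, because $D$ is increasing. In the nonbinding case $u_s^+\ge\underline u$ the bound is trivial, since then $\Delta^+=0$ and $u_s\ge u_s^+\ge\underline u$ force $\Delta=0$.
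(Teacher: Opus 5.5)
Your integral argument is correct when $0<a<2$: there Lemma~\ref{lem:substitution} makes $F_s$ strictly convex, and the constrained minimizer is $\max\{u_s,\underline u\}$. The proposition, however, is stated under the condition of Proposition~\ref{prop:checker}. That condition also covers $a\ge2$ whenever the working checker, with only the baseline evidence, detects at least half of the errors. In that case $m_d$, and hence $F_s$, need not be convex. The appendix notes that at $\alpha=0.01$ the disrupted-state miss probability is concave over a range of capability, yet it still asserts this proposition there for improvements from $G_w=25$.

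For that case your value comparison does not go through. Passing from the first to the second line of your display uses $F(u^c)\le F(u^{c+})$, which is right. It also uses $F(u_s)\ge F(u_s^+)$, which is the reverse of what the optimality of $u_s$ gives. The displayed inequality is in fact false whenever the floor binds at both precisions with $u^c=u^{c+}=\underline u$ and $u_s^+<u_s$, because it then reduces to $F(u_s)\ge F(u_s^+)$. So the general case is not proved: you state the target bound $D(u^{c+})-D(u_s)$ but do not derive it.

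The missing step is to bound $-F^+(u_s^+)$ using the optimality of $u_s^+$ for $F^+$, not any property of $F$: $F^+(u_s^+)\le F^+(u_s)$. Combine this with $F(u^c)\le F(u^{c+})$, which holds because $u^{c+}$ is feasible for the constrained problem under $G_w$. Together they give
\[
\Delta^+-\Delta\ \ge\ F^+(u^{c+})-F^+(u_s)-F(u^{c+})+F(u_s)=D(u^{c+})-D(u_s).
\]
Then split on whether the floor binds at the lower precision. If $u_s\ge\underline u$, then $\Delta=0\le\Delta^+$. Otherwise $u^{c+}\ge\underline u>u_s$, and $D$ is increasing by the cross-partial condition, so the bound is nonnegative.

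This is the paper's proof. It uses only feasibility, optimality, and the monotonicity of $D$. It needs no convexity, no derivative of $F_s$, and not even the comparison $u^{c+}\le u^c$ you invoke. It therefore covers the full hypothesis, and your integral argument becomes unnecessary. That argument does still give a useful decomposition of the increase: one part from the extra stretch $[u_s^+,u_s]$, and one from the steeper slope on $[u_s,\underline u]$.
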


\begin{proof}
Let $\bar u=\bar K-qK$, and let $\hat u_s$ minimize $F_s$ subject to $u\ge\bar u$. If an unconstrained minimizer $u_s$ satisfies $u_s\ge\bar u$, the floor does not bind, and $\Delta=0\le\Delta^+$. Otherwise $\Delta=F_s(\hat u_s)-F_s(u_s)$ with $u_s<\bar u$. Raise $G_w$ to $G_w^+$, and mark the new objective and choices with a plus. Because $u_s^+$ minimizes $F_s^+$, and $\hat u_s$ minimizes $F_s$ over a set that contains $\hat u_s^+$, we have $F_s^+(u_s^+)\le F_s^+(u_s)$ and $F_s(\hat u_s)\le F_s(\hat u_s^+)$. Therefore
\[
\Delta^+-\Delta=\big[F_s^+(\hat u_s^+)-F_s^+(u_s^+)\big]-\big[F_s(\hat u_s)-F_s(u_s)\big]\ \ge\ D(\hat u_s^+)-D(u_s),
\]
where $D=F_s^+-F_s$ is the change in expected harm from the better checker, as in the proof of Proposition~\ref{prop:checker}. It rises with $u$, and $\hat u_s^+\ge\bar u>u_s$, so $\Delta^+\ge\Delta$.
\end{proof}

The two-period version describes one decision. It shows how a better checker, persistence, and a floor change that decision; the infinite-horizon version below follows capability over many periods.

\subsection{The infinite-horizon illustration}
\label{app:illustration}

The numbers in the text come from the infinite-horizon version of the same problem, in which the decision recurs every period and formation affects all later periods through the surviving stock:
\begin{equation}
V_s(K)=R\,m_s(K)+\min_{u\ge0}\Big\{C(u)+\beta\sum_{s'}P_{ss'}V_{s'}(qK+u)\Big\}.
\label{eq:bellmanu}
\end{equation}
The two-period problem truncates this recursion after tomorrow: it replaces the continuation value $V_{s'}(K')$ with tomorrow's expected harm $R\,m_{s'}(K')$. The parameters are illustrative, not estimated: $\beta=\hoBeta$, $\delta=\hoDelta$ ($q=\hoQ$), $B=\hoBcost$, $\lambda=\hoLambda$, $\kappa=\hoKappa$, $c=\hoCcost$, $I_0=\hoIzero$, $N=\hoNsteps$, $b_0=\hoBzero$, $\alpha=\hoAlpha$ ($a=\hoA$), and $R=\hoRloss$. The long-run disruption share is $\pi_d=\hoPid$ and persistence is $\gamma=\hoGamma$, so a working period is followed by a disruption with probability $\hoPwd$ and a disrupted one with probability $\hoPdd$. Disrupted precision is $G_d=0$, and the three working precisions are $G_w\in\{\hoGA,\hoGB,\hoGC\}$. The partially internalizing organization in Section~\ref{sec:standard} has $\chi=\hoChiPrivate$. Detection depends on $b_0$, $N$, $I_0$, and $G_s$ only through $b_0^2N$, $b_0^2I_0$, and $b_0^2G_s$, so $b_0$ and $N$ set the unit of capability; here $b_0^2N=1$, so a unit of capability is a unit of $d'^2$ (Section~\ref{sec:measure}).

The illustration's formation comes from two sources, $u=\lambda r+z$: retained practice $r\in[0,1]$, which costs $Br$ because it diverts work from automated execution, and dedicated training $z\ge0$, at cost $\kappa z+\tfrac c2 z^2$. Its $C(u)$ is the least cost of producing $u$. With $\kappa<B/\lambda$ ($\hoKappa<\hoBoverLambda$ here), the first units of training are cheaper than practice: formation up to $z_0=(B/\lambda-\kappa)/c$ uses training alone, formation up to $z_0+\lambda$ adds practice at the constant marginal cost $B/\lambda$, and larger formation adds training again.\footnote{Explicitly, $C(u)=\kappa u+\tfrac c2u^2$ for $u\le z_0$; $C(u)=\kappa z_0+\tfrac c2z_0^2+\tfrac B\lambda(u-z_0)$ for $z_0<u\le z_0+\lambda$; and $C(u)=B+\kappa(u-\lambda)+\tfrac c2(u-\lambda)^2$ beyond.} This $C$ is convex and continuously differentiable with $C'(0)=\kappa$, so every result in Appendix~\ref{app:persistence_proof} applies to it.

This cost explains the practice column of Table~\ref{tab:example}. With $z_0=\hoZzero$, the organization with the least informative checker retains full practice (\hoPracA) and trains \hoTrainA; the intermediate one retains partial practice (\hoPracB), with training at $z_0$; and the most informative one relies on training alone (\hoTrainC), with no routine practice. Total formation divided by $\delta$ gives their working-spell capabilities, \hoKA, \hoKB, and \hoKC. As required formation falls, the organization cuts training above $z_0$ first, then practice, and only then the cheap first units of training, so a better checker removes retained practice before it removes training.

The numerical solution applies value function iteration to \eqref{eq:bellmanu}, starting from zero on \hoGridCoarse{} capability points on $[0,\hoKmax]$ and continuing from that solution on \hoGridFine{} points, with linear interpolation between points. Because $C$ and the interpolated value functions are convex, the choice of $u$ at each point is found by bisection on the derivative of the objective. Iteration stops when successive value functions differ by less than $\hoTol$ at every point. The solution is checked for Bellman residuals, convexity of the value functions, grid refinement, and each action against an independent scalar minimization. It also confirms, at every grid point, the infinite-horizon counterparts of Propositions~\ref{prop:checker} and~\ref{prop:states}: investment, the value functions, and capability paths do not rise as $G_w$ increases from \hoGA{} to \hoGB{} to \hoGC, and investment when the checker is disrupted is at least as high as when it works. In this illustration the ordering across checkers is strict at every grid point.

In the infinite-horizon version, the benefit of a first unit counts the misses it avoids in every later period as it depreciates. Measured with the intermediate checker at zero capability, with no further investment and full internalization, it is \hoMBiid{} in both states without persistence, and \hoMBwork{} while the checker works and \hoMBdisr{} while it is disrupted with persistence. The thresholds $\kappa$ divided by the disrupted-state benefit are $\hoKappa/\hoMBiid\approx\hoChiCritIid$ and $\hoKappa/\hoMBdisr\approx\hoChiCritPers$ (Section~\ref{sec:implication_availability}). The two-period thresholds of Proposition~\ref{prop:states}(c) are higher, because there a first unit is valued for one period only. The numerical solution confirms the persistent-case threshold: at 90 percent of it nothing is invested at any capability level, and at 110 percent investment begins in the disrupted state.

Table~\ref{tab:example} starts each organization at the capability that a long working spell sustains, $K_w^*=u_w(K_w^*)/\delta$, computed from the numerical solution, and then imposes an interruption lasting \hoOutageLength{} periods. The long-run miss rates in Section~\ref{sec:standard} come from simulating the three full-harm policies, starting from zero capability, on the same \hoSimHistories{} histories of checker states, each \hoSimPeriods{} periods long; along every history, capability is strictly lower with a better checker at every later date. Because errors arrive at the same rate in every period, the average of $m_s(K)$ after the first \hoSimBurn{} periods is the share of errors missed. Once the process has settled, capability stays between $K_w^*$ and the level that a long interruption sustains, so the onset column of Table~\ref{tab:example} is the worst case over histories: averaged over the simulated interruptions, the first disrupted period misses \hoAvgMissOnsetC{} percent of errors with the most informative checker, against \hoMissOnsetC{} percent after a long working spell. Figure~\ref{fig:reserve} is a separate arithmetic illustration of Proposition~\ref{prop:reserve} with $q=\hoResQ$, $L=\hoResL$, $K_d^*=1$, and no committed inflows, which gives $K_t\ge\hoResQ^{-3}\approx\hoResReq$. Its loss rate is chosen for transparency and is not a sensitivity analysis of Table~\ref{tab:example}.

\section{Conference Records}
\label{app:conference}

Table~\ref{tab:conference} reports the conference counts summarized in Section~\ref{sec:existing_evidence}.

\begin{table}[h]
\centering
\caption{Different parts of academic evaluation expanded at different rates}
\label{tab:conference}
\small
\begin{tabular}{lrrrr}
\toprule
& \multicolumn{2}{c}{ICLR} & \multicolumn{2}{c}{NeurIPS}\\
\cmidrule(lr){2-3}\cmidrule(lr){4-5}
& 2024 & 2025 & 2024 & 2025\\
\midrule
Submissions & \confIclrSubPre & \confIclrSubPost & \confNeurSubPre & \confNeurSubPost\\
Reviewer roster & \confIclrRevPre & \confIclrRevPost & \confNeurRevPre & \confNeurRevPost\\
Area-chair roster & \confIclrAcPre & \confIclrAcPost & \confNeurAcPre & \confNeurAcPost\\
Submissions per listed area chair & \confIclrSubPerAcPre & \confIclrSubPerAcPost & \confNeurSubPerAcPre & \confNeurSubPerAcPost\\
\bottomrule
\end{tabular}
\begin{minipage}{0.96\textwidth}
\footnotesize\vspace{4pt}
Sources: ICLR 2024 and 2025 fact sheets \citep{iclr2024facts,iclr2025facts}, the NeurIPS 2024 fact sheet \citep{neurips2024facts}, and the NeurIPS 2025 program chairs' report \citep{neurips2025chairs}; main-track counts. ICLR and NeurIPS denote the International Conference on Learning Representations and the Conference on Neural Information Processing Systems. Area-chair counts exclude senior area chairs. The final row divides the displayed counts; it is not an observed mean assignment load. Submission-stage definitions differ across sources, and roster counts do not measure active hours, expertise, or independent information. These are existing descriptive records, not new estimates of AI's effects.
\end{minipage}
\end{table}

\end{document}

%% file: numbers_oversight.tex
\newcommand{\hoBeta}{0.95}
\newcommand{\hoDelta}{0.15}
\newcommand{\hoQ}{0.85}
\newcommand{\hoBcost}{0.5}
\newcommand{\hoLambda}{0.5}
\newcommand{\hoKappa}{0.15}
\newcommand{\hoCcost}{4}
\newcommand{\hoIzero}{1}
\newcommand{\hoNsteps}{4}
\newcommand{\hoBzero}{0.5}
\newcommand{\hoAlpha}{0.05}
\newcommand{\hoA}{1.6449}
\newcommand{\hoRloss}{6}
\newcommand{\hoChiPrivate}{0.3}
\newcommand{\hoChiPrivatePct}{30}
\newcommand{\hoPid}{0.20}
\newcommand{\hoPidPct}{20}

\newcommand{\hoGamma}{0.85}
\newcommand{\hoPwd}{0.03}

\newcommand{\hoPdd}{0.88}
\newcommand{\hoKmax}{12}
\newcommand{\hoGridFine}{4,801}
\newcommand{\hoGridCoarse}{2,401}
\newcommand{\hoGA}{9}
\newcommand{\hoKA}{5.12}
\newcommand{\hoPracA}{1.00}

\newcommand{\hoPowOnsetA}{75.0}
\newcommand{\hoPowOneA}{75.6}
\newcommand{\hoPowFiveA}{77.1}
\newcommand{\hoPowTenA}{77.8}

\newcommand{\hoGB}{25}
\newcommand{\hoKB}{2.46}
\newcommand{\hoPracB}{0.31}

\newcommand{\hoPowOnsetB}{50.0}
\newcommand{\hoPowOneB}{56.7}
\newcommand{\hoPowFiveB}{69.7}
\newcommand{\hoPowTenB}{74.6}

\newcommand{\hoIdisrB}{10.8}
\newcommand{\hoGC}{64}
\newcommand{\hoKC}{0.48}
\newcommand{\hoPracC}{0.00}

\newcommand{\hoPowOnsetC}{21.5}
\newcommand{\hoPowOneC}{36.2}
\newcommand{\hoPowFiveC}{62.6}
\newcommand{\hoPowTenC}{71.5}

\newcommand{\hoIdisrC}{2.9}
\newcommand{\hoOutageLength}{30}
\newcommand{\hoZzero}{0.21}
\newcommand{\hoBoverLambda}{1}
\newcommand{\hoTrainA}{0.27}

\newcommand{\hoTrainC}{0.07}
\newcommand{\hoTol}{2\times10^{-10}}
\newcommand{\hoGapOnset}{53}
\newcommand{\hoGapTen}{6}
\newcommand{\hoKPrivB}{0.57}
\newcommand{\hoPowOnsetPrivB}{23}
\newcommand{\hoPowOnsetBround}{50}
\newcommand{\hoMBiid}{2.46}
\newcommand{\hoMBwork}{1.98}
\newcommand{\hoMBdisr}{4.38}
\newcommand{\hoChiCritIid}{0.061}
\newcommand{\hoChiCritPers}{0.034}
\newcommand{\hoAlphaCrit}{0.0228}
\newcommand{\hoAlphaStrict}{0.01}
\newcommand{\hoAStrict}{2.326}
\newcommand{\hoCompXloStrict}{0.569}
\newcommand{\hoCompXhiStrict}{1.757}
\newcommand{\hoCompIloStrict}{1.30}
\newcommand{\hoCompIhiStrict}{12.35}
\newcommand{\hoCompPloStrict}{3.9}
\newcommand{\hoCompPhiStrict}{28.5}
\newcommand{\hoCompKhiA}{0.59}
\newcommand{\hoDetBaseStrictB}{58.8}
\newcommand{\hoAlphaStricter}{0.001}

\newcommand{\hoCompIhiStricter}{29.66}

\newcommand{\hoCompPhiStricter}{35.7}
\newcommand{\hoResQ}{0.9}
\newcommand{\hoResSurvivePct}{90}
\newcommand{\hoResLossPct}{10}
\newcommand{\hoResL}{4}
\newcommand{\hoResLwordEn}{four}

\newcommand{\hoResReq}{1.372}
\newcommand{\hoResLowOne}{0.90}

\newcommand{\hoResLowTwo}{0.81}

\newcommand{\hoResLowThree}{0.729}

\newcommand{\hoDpWithout}{1.645}
\newcommand{\hoDpWith}{2.993}
\newcommand{\hoHitWithout}{50.0}
\newcommand{\hoHitWith}{91.1}
\newcommand{\hoDpCheckerSq}{6.25}
\newcommand{\hoDpHumanSq}{2.71}

\newcommand{\hoFloorExPct}{90}
\newcommand{\hoFAExPct}{5}
\newcommand{\hoZfa}{1.645}
\newcommand{\hoZfloor}{1.282}
\newcommand{\hoDpStar}{2.926}
\newcommand{\hoKdStarEx}{8.31}
\newcommand{\hoKdRatio}{3.4}
\newcommand{\hoDpBaseSq}{0.25}
\newcommand{\hoPowWorkA}{86.8}
\newcommand{\hoDpCheckSqA}{2.25}

\newcommand{\hoPowWorkB}{91.1}
\newcommand{\hoDpCheckSqB}{6.25}

\newcommand{\hoPowWorkC}{99.3}
\newcommand{\hoDpCheckSqC}{16}

\newcommand{\hoMissWorkC}{0.7}
\newcommand{\hoMissOnsetC}{78.5}
\newcommand{\hoMissUntrained}{87.4}
\newcommand{\hoSimHistories}{4,000}
\newcommand{\hoSimPeriods}{6,000}
\newcommand{\hoSimBurn}{1,000}
\newcommand{\hoAvgMissA}{15.0}
\newcommand{\hoAvgMissDisrA}{23.0}

\newcommand{\hoAvgMissC}{8.8}
\newcommand{\hoAvgMissDisrC}{41.3}
\newcommand{\hoAvgMissOnsetC}{71.8}
\newcommand{\hoNerr}{100}
\newcommand{\hoNclean}{100}
\newcommand{\hoNcleanMore}{400}
\newcommand{\hoSeDp}{0.25}
\newcommand{\hoSeDpSq}{0.8}
\newcommand{\hoSeDpMore}{0.16}
\newcommand{\hoDpStarSq}{8.56}
\newcommand{\hoLossOnePeriod}{0.37}

\newcommand{\hoOpFAHigh}{20}
\newcommand{\hoOpHitBHigh}{78.9}
\newcommand{\hoOpHitCHigh}{50.6}
\newcommand{\hoOpKdHigh}{4.26}

\newcommand{\hoStrictPct}{99}
\newcommand{\hoKdStrict}{15.5}
\newcommand{\hoKdStrictHigh}{9.8}
\newcommand{\hoKdStrictRatio}{6.3}
\newcommand{\hoFloorTargetPct}{50}

\newcommand{\hoFloorKzero}{2.5}
\newcommand{\hoPlanCostA}{1.835}

\newcommand{\hoFloorPremBuyA}{0.030}
\newcommand{\hoPlanCostB}{1.203}

\newcommand{\hoFloorPremBuyB}{0.117}
\newcommand{\hoPlanCostC}{0.701}
\newcommand{\hoFloorPremPlanC}{0.106}
\newcommand{\hoFloorPremBuyC}{0.190}
\newcommand{\hoFloorSharePctC}{15}

%% file: numbers_conference.tex
\newcommand{\confIclrSubPre}{7,262}
\newcommand{\confIclrSubPost}{11,603}
\newcommand{\confIclrSubGrowth}{59.8}
\newcommand{\confIclrRevPre}{8,950}
\newcommand{\confIclrRevPost}{18,325}

\newcommand{\confIclrAcPre}{624}
\newcommand{\confIclrAcPost}{823}
\newcommand{\confIclrAcGrowth}{31.9}

\newcommand{\confIclrSubPerAcPre}{11.6}

\newcommand{\confIclrSubPerAcPost}{14.1}

\newcommand{\confNeurSubPre}{15,671}
\newcommand{\confNeurSubPost}{21,575}

\newcommand{\confNeurRevPre}{13,640}
\newcommand{\confNeurRevPost}{20,518}

\newcommand{\confNeurAcPre}{1,393}
\newcommand{\confNeurAcPost}{1,663}

\newcommand{\confNeurSubPerAcPre}{11.2}

\newcommand{\confNeurSubPerAcPost}{13.0}